\tikzset{%%
  >={To[length=2.5pt]}
  }
\renewcommand{\epsilon}{\varepsilon}
\newcommand{\nop}[1]{}
\newcommand{\action}{\ensuremath{\text{action}}}
\newcommand{\object}{\ensuremath{\text{object}}}
\newcommand{\mydef}{\mathrel{:=}}
\newcommand{\ie}{\textit{i.e.}}
\newcommand{\eg}{\textit{e.g.}}
\newcommand{\trans}[1][i]{T_{#1}}
\newcommand{\transset}{{\mathcal{T}}}
\newcommand{\tfirst}{\textit{first}}
\newcommand{\objects}{\mathbf{Obj}}
\newcommand{\schedule}{s}
\newcommand{\schop}[1][\schedule]{O_{#1}}
\newcommand{\schord}[1][\schedule]{\leq_{#1}}
\newcommand{\schords}[1][\schedule]{<_{#1}}
\newcommand{\schvord}[1][\schedule]{\ll_{#1}}
\newcommand{\schvf}[1][\schedule]{v_{#1}}
\newcommand{\sstart}{\textit{op}_0}
\newcommand{\dependson}[3][\schedule]{#2 \rightarrow_{#1} #3}
\newcommand{\lastv}[2][\schedule]{\textit{last}_{\schvord[#1]}(#2)}
\newcommand{\isolationlevel}{\text{$\mathcal{I}$}}
\newcommand{\mvrc}{\text{RC}\xspace}
\newcommand{\si}{\text{SI}\xspace}
\newcommand{\ssi}{\text{SSI}\xspace}
\newcommand{\MVRC}{\mvrc}
\newcommand{\rc}{\mvrc}
\newcommand{\alloc}{\mathcal{A}}
\newcommand{\tmpvs}{\text{VS}}
\newcommand{\seg}[1]{SeG(#1)}
\newcommand{\cg}[1]{\seg{#1}}
\newcommand{\prefix}[2]{{\normalfont\textsf{prefix}}_{#2}(#1)}
\newcommand{\postfix}[2]{{\normalfont\textsf{postfix}}_{#2}(#1)}
\newcommand{\readun}{{\normalfont\textsc{read uncommitted}}\xspace}
\newcommand{\readcom}{{\normalfont\textsc{read committed}}\xspace}
\newcommand{\snapshot}{{\normalfont\textsc{snapshot isolation}}\xspace}
\newcommand{\isopostgres}{\ensuremath{\{\mvrc,\allowbreak\si,\allowbreak\ssi\}}\xspace}
\newcommand{\LOGSPACE}{{\sc logspace}\xspace}
\newcommand{\NP}{{\sc np}\xspace}
\newcommand{\coNP}{{\rm co{\sc np}}\xspace}
\newcounter{conditioncounter}
\newcommand{\condition}[2][C]{%
    \renewcommand{\theconditioncounter}{#1\arabic{conditioncounter}}
    \refstepcounter{conditioncounter}%
    \noindent
    \begin{minipage}[t]{1.0cm}\vspace{0pt}
        \center(\theconditioncounter)
    \end{minipage}
    \begin{minipage}[t]{%\textwidth
            \columnwidth-1.2cm}\vspace{0pt}#2\end{minipage}\par
    \medskip
}
\newcommand{\myR}{\ensuremath{\mathtt{R}}}
\newcommand{\myW}{\ensuremath{\mathtt{W}}}
\newcommand{\R}[2][i]{\myR_{#1}\mathtt{[#2]}}
\newcommand{\W}[2][i]{\myW_{#1}\mathtt{[#2]}}
\newcommand{\CT}[1][i]{\mathtt{C}_{#1}}
\newcommand{\x}{\mathtt{t}}
\newcommand{\y}{\mathtt{v}}
\newcommand{\z}{\mathtt{q}}
\begin{document}

%%
%% The "title" command has an optional parameter,
%% allowing the author to define a "short title" to be used in page headers.
%\title{View-Serializability for Multiversion Concurrency Control}
\title[When View- and Conflict-Robustness Coincide for Multiversion Concurrency Control]{When View- and Conflict-Robustness Coincide \\for Multiversion Concurrency Control}

%%
%% The "author" command and its associated commands are used to define
%% the authors and their affiliations.
%% Of note is the shared affiliation of the first two authors, and the
%% "authornote" and "authornotemark" commands
%% used to denote shared contribution to the research.
\author{Brecht Vandevoort}
\email{brecht.vandevoort@uhasselt.be}
\orcid{0000-0001-7212-4625}
\affiliation{%
    \institution{UHasselt, Data Science Institute, ACSL}
    \city{Diepenbeek}
    \country{Belgium}
}

\author{Bas Ketsman}
\email{bas.ketsman@vub.be}
\orcid{0000-0002-4032-0709}
\affiliation{%
    \institution{Vrije Universiteit Brussel}
    \city{Brussels}
    \country{Belgium}
}

\author{Frank Neven}
\email{frank.neven@uhasselt.be}
\orcid{0000-0002-7143-1903}
\affiliation{%
    \institution{UHasselt, Data Science Institute, ACSL}
    \city{Diepenbeek}
    \country{Belgium}
}

%%
%% By default, the full list of authors will be used in the page
%% headers. Often, this list is too long, and will overlap
%% other information printed in the page headers. This command allows
%% the author to define a more concise list
%% of authors' names for this purpose.
%\renewcommand{\shortauthors}{B. Vandevoort, B. Ketsman and F. Neven}

%%
%% The abstract is a short summary of the work to be presented in the
%% article.

\begin{abstract}
    %!TEX root = main.tex

A DBMS allows trading consistency for efficiency through the allocation of isolation levels that are strictly weaker than serializability. The \emph{robustness problem} asks whether, for a given set of transactions and a given allocation of isolation levels, every possible interleaved execution of those transactions that is allowed under the provided allocation, is always \emph{safe}. In the literature, safe is interpreted as conflict-serializable (to which we refer here as conflict-robustness). In this paper, we study the view-robustness problem, interpreting safe as view-serializable. View-serializability is a more permissive notion that allows for a greater number of schedules to be serializable and aligns more closely with the intuitive understanding of what it means for a database to be consistent. However, view-serializability is more complex to analyze (e.g., conflict-serializability can be decided in polynomial time whereas deciding view-serializability is NP-complete). While  conflict-robustness implies view-robustness, the converse does not hold in general. In this paper, we provide a sufficient condition for isolation levels guaranteeing that conflict- and view-robustness coincide and show that this condition is satisfied by the isolation levels occurring in Postgres and Oracle: read committed (RC), snapshot isolation (SI) and serializable snapshot isolation (SSI). It hence follows that for these systems, widening from conflict- to view-serializability does not allow for more sets of transactions to become robust. Interestingly, the complexity of deciding serializability within these isolation levels is still quite different. Indeed, deciding conflict-serializability for schedules allowed under RC and SI remains in polynomial time, while we show that deciding view-serializability within these isolation levels remains NP-complete.

\end{abstract}

%%
%% The code below is generated by the tool at http://dl.acm.org/ccs.cfm.
%% Please copy and paste the code instead of the example below.
%%

% \begin{CCSXML}
%     <ccs2012>
%     <concept>
%     <concept_id>10002951.10002952.10003190.10003193</concept_id>
%     <concept_desc>Information systems~Database transaction processing</concept_desc>
%     <concept_significance>500</concept_significance>
%     </concept>
%     </ccs2012>
% \end{CCSXML}

\ccsdesc[500]{Information systems~Database transaction processing}

% Keywords. The author(s) should pick words that accurately describe
%% the work being presented. Separate the keywords with commas.
\keywords{concurrency control, robustness, complexity}

%%
%% This command processes the author and affiliation and title
%% information and builds the first part of the formatted document.
\maketitle

%!TEX root = main.tex

\section{Introduction}
\label{sec:intro}

The holy grail in concurrency control is serializability. This notion  guarantees the highest level of isolation between transactions, ensuring that the results of a transaction remain invisible to other transactions until it is committed. Additionally, serializability offers a simple and intuitive model for programmers enabling them to focus exclusively on the correctness of individual transactions independent of any concurrent transactions. Formally, an interleaving or schedule of transactions is \emph{serializable} when it is equivalent to a serial execution of those transactions. The definition of serializability crucially depends on the chosen notion of equivalence, that is, when two schedules of transactions are considered to be equivalent. The most prevalent notion in the concurrency literature is that of \emph{conflict-equivalence} which requires that the ordering of conflicting operations in both schedules is preserved.\footnote{Two operations are conflicting when they access the same object in the database and at least one of them is a write operation.} A schedule then is \emph{conflict-serializable} when it is conflict-equivalent to a serial schedule. 

A more permissive notion of equivalence is that of \emph{view-equiv\-a\-lence} which requires that corresponding read operations in the two schedules see the same values and that the execution of both schedules leaves the database in the same state. 
\emph{View-serializability} is then based on view-equivalence and aligns more closely with the intuitive understanding of what it means for a database to be consistent as it focuses on the outcome rather than on the order of specific interactions between transactions. In addition, view-serializability is more broadly applicable than conflict-se\-ri\-al\-izability in that it allows strictly more schedules to be serializable. 
Given the many advantages and high desirability of serializable schedules, it might seem counterintuitive that in practice only the more stringent notion of conflict-serializability is considered. The main reason, however, is that reasoning about conflict-se\-ri\-al\-izability is much more intuitive than reasoning about view-serializability. Indeed, conflict-serializability can be characterized through the absence of cycles in a so-called conflict-graph which in turn gives rise to a very natural and efficient polynomial time decision algorithm. Moreover, the idea of defining admissible schedules in terms of the absence of cycles in a graph structure naturally extends to the definition of isolation levels where additional requirements are enforced on the type of edges in cycles (see, e.g., \cite{adya99}). In strong contrast, however, deciding view-serializability is NP-complete~\cite{DBLP:books/cs/Papadimitriou86}. Conflict-serializability can therefore be considered as a good approximation for view-serializability that is easier to implement and that can be the basis of concurrency control algorithms with acceptable performance. Next, we argue that in the context of robustness it does make sense to reconsider view-serializability.  

Many relational database systems offer a range of isolation levels, allowing users to trade in isolation guarantees for better performance. However, executing transactions concurrently at weaker degrees of isolation does carry some risk as it can result in specific anomalies. Nevertheless, there are situations when a group of transactions can be executed at an isolation level weaker than serializability without causing any errors. In this way, we get the higher isolation guarantees of serializability for the price of a weaker isolation level, which is typically implementable with a less expensive concurrency control mechanism. 
There is a famous example that is part of database folklore: the
TPC-C benchmark \cite{TPCC} is robust against Snapshot Isolation (SI), so there is no need to run a stronger, and more expensive, concurrency control algorithm than SI if the workload is just TPC-C. This has
played a role in the incorrect choice of SI as the general concurrency control algorithm for isolation level Serializable in Oracle and PostgreSQL
(before version 9.1, cf.\ %e.g.\ 
\cite{DBLP:journals/tods/FeketeLOOS05}).

The property discussed above is called \emph{(conflict-)robustness}\footnote{Actually, robustness is the term used in the literature as it was always clear that robustness w.r.t.\  conflict-serializability was meant. We use the term conflict-robustness in this paper to distinguish it from view-robustness and just say robustness when the distinction does not matter.} \cite{DBLP:conf/concur/0002G16,DBLP:conf/pods/Fekete05,DBLP:journals/tods/FeketeLOOS05}: a set of transactions $\transset$ is called \emph{conflict-robust} against a given isolation level if every possible interleaving of the transactions in $\transset$ that is allowed under the specified isolation level is conflict-serializable. 
The robustness problem received quite a bit of attention in the literature, and we refer to Section~\ref{sec:relatedwork} for an extensive discussion of prior work. In~\cite{vldbpaper}, it was experimentally verified that, under high contention, workloads that are conflict-robust against Read Committed (\mvrc) can be evaluated faster under \mvrc compared to stronger isolation levels. This means that the stronger guarantee of serializability is obtained at the lower cost of evaluating under \mvrc. Unfortunately, not all workloads are conflict-robust against a weaker isolation level. A natural question is therefore whether such workloads are \emph{view-robust} which would imply that they could still be safely executed at the weaker isolation level despite not being conflict-robust. Here, safe refers to view-serializable which, as discussed above, better corresponds to the intuitive understanding of what it means for a database to be consistent. We discuss this next in more detail.

\emph{View-robustness} against an isolation level is defined in analogy to conflict-robustness where it is now required that every possible interleaving allowed by the isolation level must be \emph{view}-serializable. It readily follows that conflict-robustness implies view-robustness: when every allowed schedule is conflict-equivalent to a serial schedule, it is also view-equivalent to the same serial schedule as conflict-equivalence implies view-equivalence. However, we show that there are isolation levels for which view-robustness does not imply conflict-robustness (cf., Proposition~\ref{prop:rob:basicproperty}(\ref{basic:two})). That is, there are sets of transactions that are view-robust but not conflict-robust against those isolation levels, and view-robustness can therefore allow more sets of transactions to be safely executed.

In this paper, we study the view-robustness problem for the isolation levels occurring in PostgreSQL and Oracle: \mvrc, \si, and serializable snapshot isolation (\ssi)~\cite{DBLP:journals/tods/FeketeLOOS05,PortsGrittner2012}. 
We point out that the na\"ive algorithm which determines non-robustness by merely guessing an allowed schedule and testing that it is not view-serializable, is in the complexity class $\Sigma_2^P$. It is important to realize that even such high complexity does not necessarily rule out practical applicability. Indeed, detecting robustness is not an online problem that occurs while a concrete transaction schedule unfolds possibly involving thousands of transactions. The approach of \cite{vldbpaper,DBLP:conf/icdt/VandevoortK0N22,DBLP:conf/edbt/VandevoortK0N23} that we follow in this paper, is targeted at settings where transactions are generated by a handful  of transaction \emph{programs}, for instance, made available through an API. The TPC-C benchmark, for example, consists of five different transaction programs, from which an unlimited number of concrete transactions can be instantiated. Robustness then becomes a \emph{static} property that can be tested offline at API design time. When the small set of transaction programs passes the robustness test, the database isolation level can be set to the weaker isolation level for that API without fear of introducing anomalies. To further clarify the difference between transaction programs and mere transactions: a transaction is a sequence of read and write operations to concrete database objects while a transaction program is a parameterized transaction (possibly containing loops and conditions) that can be instantiated to form an unlimited number of concrete transactions. Previous work has shown that algorithms for deciding conflict-robustness for transaction programs use algorithms for deciding conflict-robustness for mere transactions as basic building blocks~\cite{vldbpaper,DBLP:conf/icdt/VandevoortK0N22}. It therefore makes sense to first study view-robustness for concrete transactions (as we do in this paper) in an effort to increase the amount of workloads that can be executed safely at weaker isolation levels.

Notwithstanding this inherent potential for practical applicability, we show the (at least to us) surprising result that for the isolation levels of PostgreSQL and Oracle, view-robustness \emph{always} implies conflict-robustness. The latter even extends to the setting of mixed allocations where transactions can be allocated to different isolation levels (as for instance considered in \cite{DBLP:conf/pods/VandevoortKN23,DBLP:conf/pods/Fekete05}). 
This means in particular, that for these systems, widening from conflict- to view-serializability does not allow for more sets of transactions to become robust. As a main technical tool, we identify a criterion (Condition~\ref{cond:rob:equivalence}) for isolation levels in terms of the existence of a counter-example schedule of a specific form that witnesses non-conflict-robustness. We then show that for classes of isolation levels that satisfy this condition, %Condition~\ref{cond:rob:equivalence}, 
view-robustness always implies conflict-robustness (Theorem~\ref{theo:condition}) and prove that the class $\{\mvrc,\si,\ssi\}$ satisfies it
%Condition~\ref{cond:rob:equivalence} 
(Theorem~\ref{theo:rc-si-ssi-view-robust}).

While view- and conflict-robustness coincide for the class $\{\mvrc,\si,\allowbreak \ssi\}$, it is interesting to point out that conflict- and view-serializ\-abil\-i\-ty do not.  In fact, the complexity of the corresponding decision problems within these isolation levels is quite different.\footnote{That is, assuming P$\neq$NP.} Indeed, it readily follows that deciding conflict-serializability for schedules allowed under RC and SI remains in polynomial time,\footnote{For \ssi the problem is trivial as any schedule allowed under \ssi is both conflict- and view-serializable.} while we show that deciding view-serializability within these isolation levels remains NP-complete. 

In addition to the practical motivation outlined above to study view-robustness, the present paper can also be related to work done by 
Yannakakis~\cite{DBLP:journals/jacm/Yannakakis84} who showed that view- and conflict-robustness coincide for the class of isolation levels that allows all possible schedules.\footnote{That paper refers to the problems as view- and conflict-safety, and studies safety for more serializability notions.} The present paper can therefore be seen as an extension of that research line where it is obtained that 
(\emph{i}) view- and conflict-robustness do not coincide for all classes of isolation levels; and, (\emph{ii}) view- and conflict-robustness do coincide for the isolation levels in PostgreSQL and Oracle.

\paragraph{Outline.} This paper is further organized as follows. 
We introduce the necessary definitions in Section~\ref{sec:defs}. We discuss conflict- and view-robustness in Section~\ref{sec:robustness} and consider the complexity of deciding view-serializability in Section~\ref{sec:deciding:viewser}. Finally, we discuss related work in Section~\ref{sec:relatedwork}
and conclude in Section~\ref{sec:conclusions}.
%Due to space constraints proofs are moved to an appendix.
All proofs are moved to the appendix.

% !TeX root = main.tex

\section{Definitions}
\label{sec:defs}

\subsection{Transactions and schedules}

We fix an infinite set of objects $\objects$. A \emph{transaction} $\trans[]$ over $\objects$ is a sequence of operations $o_1\cdots o_n$. To every operation $o$, we associate $\action(o)\in \{\myR,\myW,\CT[]\}$ and $\object(o)\in\objects$. We say that $o$ is a read, write or commit operation when $\action(o)$ equals $\myR$, $\myW$, and $\CT[]$, respectively, and that $o$ is an operation on $\object(o)$. 
In the sequel, we leave the set of objects $\objects$ implicit when it is clear from the context and just say transaction rather than transaction over $\objects$.
Formally, we model a transaction as a linear order $(\trans[],\leq_{\trans[]})$, where $\trans[]$ is the set of 
operations occurring in the transaction and $\leq_{\trans[]}$ encodes the ordering of the operations. As usual, we use $<_{\trans[]}$ to denote the strict ordering. For a transaction $\trans[]$, we use $\tfirst(\trans[])$ to refer to the first operation in $\trans[]$.

We introduce some notation to facilitate the exposition of examples. For an object $\x \in \objects$, we denote by $\R[]{\x}$ a \emph{read} operation on $\x$ and by $\W[]{\x}$ a \emph{write} operation on $\x$. We denote the special \emph{commit} operation simply by $\CT[]$. When considering a set $\transset$ of transactions, we assume that every transaction in the set has a unique id $i$ and write $\trans[i]$ to make this id explicit. Similarly, to distinguish the operations of different transactions, we add this id as a subscript to the operation. That is, we write $\W{\x}$ and $\R{\x}$ to denote a $\W[]{\x}$ and $\R[]{\x}$ occurring in transaction $\trans$; similarly $\CT[i]$ denotes the commit operation in transaction $\trans[i]$. To avoid ambiguity of this notation, in the literature it is commonly assumed that a transaction performs at most one write and one read operation per object (see, \eg\
\cite{DBLP:conf/sigmod/BerensonBGMOO95,DBLP:conf/pods/Fekete05}). \emph{We follow this convention only in examples and emphasize that all our results hold for the more general setting in which multiple writes and reads per object are allowed.}

A \emph{(multiversion) schedule} $\schedule$ over a set $\transset$ of transactions is a tuple $(\schop, \schord, \schvord, \schvf)$ where
\begin{itemize}
    \item $\schop$ is the set containing all operations of transactions in $\transset$ as well as a special operation $\sstart$ conceptually writing the initial versions of all existing objects,
    \item $\schord$ encodes the ordering of these operations,
    \item $\schvord$ is a \emph{version order} providing for each object $\x$ a total order over all write operations on $\x$ occurring in $\schedule$, and,
    \item $v_\schedule$ is a \emph{version function} mapping each read operation $a$ in $\schedule$ to either $\sstart$ or to a write operation in $\schedule$.
\end{itemize}
We require that $\sstart \schord a$ for every operation $a \in {\schop}$, {$\sstart \schvord a$ for every write operation $a \in {\schop}$}, {$a \schvord b$ iff $a <_{\trans[]} b$ for every pair of write operations $a, b$ occurring in a transaction $\trans[] \in \transset$ and writing to the same object,} and that $a <_{\trans[]} b$ implies $a \schords b$ for every $\trans[] \in \transset$ and every $a,b \in \trans[]$. 
We furthermore require that for every read operation $a$, $v_\schedule(a) \schords a$ and, if $v_\schedule(a) \neq \sstart$, then the operation $v_\schedule(a)$ is on the same object as $a$.
Intuitively, these requirements imply the following: $\sstart$ indicates the start of the schedule, the order of operations in $s$ is consistent with the order of operations in every transaction $\trans[]\in\transset$, and the version function maps each read operation $a$ to the operation that wrote the version observed by $a$.
If $v_\schedule(a)$ is $\sstart$, then $a$ observes the initial version of this object.
The version order $\schvord$ represents the order in which different versions of
an object are installed in the database.
For a pair of write operations on the
same object, this version order does not necessarily coincide with $\schord$.
For example, under \mvrc and \si the version order is based on the commit order instead.
If however these two write operations $a, b \in \schop$ occur in the same transaction $\trans[] \in \transset$, then we do require that these versions are installed in the order implied by the order of operations in $\transset$. That is, $a \schvord b$ iff $a <_{\trans[]} b$. See Figure~\ref{fig:ex:schedule} for an illustration of a schedule. {In this schedule, the read operations on $\x$ in $\trans[1]$ and $\trans[4]$ both read the initial version of $\x$ instead of the version written but not yet committed by $\trans[2]$. Furthermore, the read operation $\R[2]{\y}$ in $\trans[2]$ reads the initial version of $\y$ instead of the version written by $\trans[3]$, even though $\trans[3]$ commits before $\R[2]{\y}$.}

We say that a schedule $\schedule$ is a \emph{single-version schedule} if {$\schvord$ is compatible with $\schord$ and} every read operation always reads the last written version of the object. Formally, {for each pair of write operations $a$ and $b$ on the same object, $a \schvord b$ iff $a \schords b$, and} for every read operation $a$ there is no write operation $c$ on
the same object as $a$
with $v_\schedule(a) \schords c  \schords a$.
A single version schedule over a set of transactions $\transset$ is \emph{single-version
    serial} if its transactions are not interleaved with operations from other transactions. That is, for every $a,b,c \in {\schop}$ with $a <_{\schedule}
    b<_{\schedule} c$ and $a,c \in \trans[]$ implies $b \in \trans[]$ for every
$\trans[] \in \transset$.

    {The absence of aborts in our definition of schedule is consistent with the common assumption~\cite{DBLP:conf/pods/Fekete05,DBLP:conf/concur/0002G16} that an underlying recovery mechanism will rollback aborted transactions. We only consider isolation levels that only read committed versions. Therefore there will never be cascading aborts.}

\subsection{Conflict-Serializability}
\label{sec:ser}

Let $a_j$ and $b_i$ be two operations on the same object $\x$ from different transactions $\trans[j]$ and $\trans[i]$ in a set of transactions $\transset$. We then say that $b_i$ is \emph{conflicting} with $a_j$ if:
\begin{itemize}
    \item \emph{(ww-conflict)} $b_i = \W[i]{\x}$ and $a_j = \W[j]{\x}$; or,
    \item \emph{(wr-conflict)} $b_i = \W[i]{\x}$ and $a_j = \R[j]{\x}$; or,
    \item \emph{(rw-conflict)} $b_i = \R[i]{\x}$ and $a_j = \W[j]{\x}$.
\end{itemize}
In this case, we also say that $b_i$ and $a_j$ are conflicting operations.
Furthermore, commit operations and the special operation $\sstart$ never conflict with any other operation.
When $b_i$ and $a_j$ are conflicting operations in $\transset$, we say that $a_j$ \emph{depends on} $b_i$ in a schedule $\schedule$ over $\transset$, denoted $\dependson{b_i}{a_j}$ if:
\begin{itemize}
    \item \emph{(ww-dependency)} {{$b_i$ is ww-conflicting with $a_j$} and $b_i \schvord a_j$}; or, 
    \item \emph{(wr-dependency)} {{$b_i$ is wr-conflicting with $a_j$} and $b_i = \schvf(a_j)$ or $b_i \schvord \schvf(a_j)$}; or,
    \item \emph{(rw-antidependency)} {{$b_i$ is rw-conflicting with $a_j$} and\\ $\schvf(b_i) \schvord a_j$.}
\end{itemize}

Intuitively, a ww-dependency from $b_i$ to $a_j$ implies that $a_j$ writes a version of an object that is installed after the version written by $b_i$.
A wr-dependency from $b_i$ to $a_j$ implies that $b_i$ either writes the version observed by $a_j$, or it writes a version that is installed before the version observed by $a_j$.
A rw-antidependency from $b_i$ to $a_j$ implies that $b_i$ observes a version installed before the version written by $a_j$.
{For example, the dependencies $\dependson[\schedule_1]{\W[2]{\x}}{\W[4]{\x}}$, $\dependson[\schedule_1]{\W[3]{\y}}{\R[4]{\y}}$ and $\dependson[\schedule_1]{\R[4]{\x}}{\W[2]{\x}}$ are respectively a ww-dependency, a wr-dependency and a rw-antidependency in schedule $\schedule_1$ presented in Figure~\ref{fig:ex:schedule}.} 

Two schedules $\schedule$ and $\schedule'$ are \emph{conflict-equivalent} if they are over the same set $\transset$ of transactions and for every pair of conflicting operations $a_j$ and $b_i$, $\dependson[\schedule]{b_i}{a_j}$ iff $\dependson[\schedule']{b_i}{a_j}$.

\begin{definition}
    A schedule $\schedule$ is \emph{conflict-serializable} if it is conflict-equivalent to a single-version serial schedule.
\end{definition}

A {\emph{serialization graph}} $\cg{\schedule}$ for schedule $\schedule$ over a set of transactions $\transset$ is the graph whose nodes are the transactions in $\transset$ and where there is an edge from $T_i$ to $T_j$ if {$T_j$ has an operation $a_j$ that depends on an operation $b_i$ in $T_i$, thus with $\dependson{b_i}{a_j}$.}

\begin{theorem}[implied by \cite{DBLP:conf/icde/AdyaLO00}]\label{theo:not-conflict-serializable}
    A schedule $\schedule$ is conflict-serializable iff\/ $\cg{\schedule}$ is acyclic.
\end{theorem}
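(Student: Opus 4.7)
The plan is to prove both directions of the equivalence.

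For the ``only if'' direction, suppose $\schedule$ is conflict-equivalent to a single-version serial schedule $\schedule'$ over the same set $\transset$. Conflict-equivalence preserves the set of dependencies between conflicting pairs, so $\cg{\schedule}$ and $\cg{\schedule'}$ coincide as directed graphs, and it suffices to argue that $\cg{\schedule'}$ is acyclic. Enumerate the transactions in the order $\trans[1], \ldots, \trans[n]$ in which they appear serially in $\schedule'$. Because $\schedule'$ is single-version, $\schvord[\schedule']$ is compatible with $\schord[\schedule']$ and every read observes the most recent preceding write. A short case analysis on the three dependency types then shows that any cross-transaction dependency $\dependson[\schedule']{b_i}{a_j}$ forces $i < j$: ww-dependencies follow the (serial) version order, wr-dependencies arise only when the installed version was written in a strictly earlier transaction or by the writer itself, and rw-antidependencies require the version read to precede the conflicting write in the serial order. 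Hence every edge in $\cg{\schedule'}$ points forward in the serial enumeration, which rules out cycles.

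For the ``if'' direction, assume $\cg{\schedule}$ is acyclic and fix any topological ordering $\trans[1], \ldots, \trans[n]$ of its vertices. Let $\schedule'$ be the single-version serial schedule that executes these transactions in this order, with $\schvord[\schedule']$ and $\schvf[\schedule']$ forced by the single-version requirement. I would show that $\schedule$ and $\schedule'$ are conflict-equivalent via the following dichotomy: for any pair of conflicting operations $b_i \in \trans[i]$ and $a_j \in \trans[j]$ with $i \neq j$, exactly one of $\dependson[\schedule]{b_i}{a_j}$ and $\dependson[\schedule]{a_j}{b_i}$ holds. This is obtained by a case split on the conflict type, using that $\schvord[\schedule]$ is a total order on writes to each object, so that $b_i$ and $\schvf[\schedule](a_j)$ (or $\schvf[\schedule](b_i)$ and $a_j$) are always comparable, with $\sstart$ handled as the minimum element. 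Combined with acyclicity of $\cg{\schedule}$, this upgrades to: $\dependson[\schedule]{b_i}{a_j}$ iff $i$ precedes $j$ in the chosen topological order. The analogous case analysis applied to $\schedule'$ yields $\dependson[\schedule']{b_i}{a_j}$ iff $i < j$ in the serial order, which by construction is the same ordering. Thus $\schedule$ and $\schedule'$ agree on every conflicting pair, so they are conflict-equivalent and $\schedule$ is conflict-serializable.

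The main obstacle I expect is the dichotomy step in the ``if'' direction, namely that between any two conflicting operations exactly one direction of dependency holds in $\schedule$. The ``at most one'' half follows immediately from the strictness of $\schvord[\schedule]$, but the ``at least one'' half requires a careful enumeration of subcases, particularly when $\schvf[\schedule](a_j) = \sstart$ or when $\schvf[\schedule](a_j)$ equals $b_i$ itself, since these sit at the boundary of the wr/rw clauses in the definition of $\dependson[\schedule]{\cdot}{\cdot}$. Once this dichotomy is isolated, combining it with acyclicity to pin down the direction of each dependency, and running the mirror analysis on the constructed serial $\schedule'$, is routine.
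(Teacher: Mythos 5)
Your proof is correct, but note that the paper itself does not prove this statement: it is labelled ``implied by'' Adya et al.~\cite{DBLP:conf/icde/AdyaLO00}, and no argument for it appears in the appendix. What you have written is therefore a self-contained proof inside the paper's own formalism, structured as the classical serialization-graph argument: every dependency in a single-version serial schedule points forward in the serial order (giving acyclicity), and conversely a topological sort of $\cg{\schedule}$ yields the conflict-equivalent serial schedule. The genuinely non-classical ingredient---which you correctly isolate as the crux---is the dichotomy lemma: because dependencies here are defined through the version order $\schvord$ and the fixed version function $\schvf$ rather than through $\schord$, one must verify that for every conflicting pair exactly one of $\dependson{b_i}{a_j}$ and $\dependson{a_j}{b_i}$ holds, with the boundary cases $\schvf(a_j) = \sstart$ (which forces an rw-antidependency, since $\sstart \schvord c$ for every write $c$) and $\schvf(a_j) = b_i$ (which forces a wr-dependency) treated separately; totality of $\schvord$ on the writes to each object settles the remaining cases, and your outline handles all of these correctly. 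This dichotomy is precisely the translation work that the phrase ``implied by'' leaves implicit: Adya's result is established for a somewhat different formalization of histories and direct serialization graphs, whereas your lemma is what makes the standard argument go through for this paper's specific notions of conflict-equivalence (preservation of all dependencies) and single-version serial schedules. Your route buys a verifiable, formalism-matched proof; the citation buys the authors brevity at the cost of leaving that bridging step unchecked.
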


{Figure~\ref{fig:ex:sergraph} visualizes the serialization graph $\cg{\schedule_1}$ for the schedule $\schedule_1$ in Figure~\ref{fig:ex:schedule}. Since $\cg{\schedule_1}$ is not acyclic, $\schedule_1$ is not conflict-serializable.}

\begin{figure}
    \begin{center}
        \begin{tikzpicture}[remember picture]
            \node[anchor=west](node0) at (0,0.5) {$\sstart$};
            \node[anchor=west] at (0,0)
            {$\phantom{\sstart\,\W[2]{\x}\,\R[4]{y}\,
                        \W[3]{\x}\,\CT[3]}\, \subnode{node11}{\R[1]{\x}}\, \CT[1]\, \phantom{\R[2]{\y}\, \W[2]{\z}\, \CT[2]\, \W[3]{\z}\, \CT[3]}$};
            \node[anchor=west] at (0,-0.5) {$\phantom{\sstart}\,
                    \subnode{node21}{\W[2]{\x}}\,\phantom{\R[4]{\y}\, \W[3]{\y}\, \CT[3]\, \R[1]{\x}\, \CT[1]}\,
                    \subnode{node22}{\R[2]{\y}}\, \CT[2]$};
            \node[anchor=west] at (0,-1)
            {$\phantom{\sstart\,\W[2]{\x}\,\R[4]{\y}}\,
                    \subnode{node31}{\W[3]{\y}}\,
                    \CT[3]$};
            \node[anchor=west] at (0,-1.5)
            {$\phantom{\sstart\,\W[2]{\x}}\,\subnode{node41}{\R[4]{\x}}\,
                    \phantom{\W[3]{\y}\,
                        \CT[3]\, \R[1]{\x}\, \CT[1]\, \R[2]{\y}\, \CT[2]}\,
                    \subnode{node42}{\W[4]{\x}}\,\subnode{node43}{\R[4]{\y}}\,
                    \CT[4]$};
            
            \draw[->,solid,out=180,in=0] (node11) to (node0);
            
            \draw[->,solid,in=-15,out=180] (node22) to (node0);
            
            \draw[->,double,solid,in=180,out=-90] (node0) to (node21);
            
            \draw[->,solid,in=-135,out=180] (node41) to (node0);
            
            \draw[->,double,solid,out=-30,in=135] (node0) to (node31);
            
            \draw[->,solid,double,in=180,out=-35] (node21) to (node42);
            
            \draw[->,solid,out=160,in=20] (node43) to (node31);
            \node at (-0.5,0) {$\trans[1]:$};
            \node at (-0.5,-0.5) {$\trans[2]:$};
            \node at (-0.5,-1) {$\trans[3]:$};
            \node at (-0.5,-1.5) {$\trans[4]:$};
        \end{tikzpicture}

    \end{center}

    \caption{\label{fig:ex:schedule}
        A schedule $\schedule_1$ with $\schvf[\schedule_1]$ (single lines) and $\schvord[\schedule_1]$ (double lines) represented through
        arrows.
    }

\end{figure}
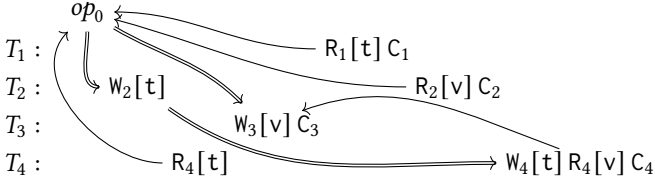

\subsection{View-Serializability}
\label{sec:vser}

For a schedule $\schedule$ and object $\x$ occurring in a write operation in a transaction in $s$, let $\lastv[\schedule]{\x}$ denote the write operation installing the last version of $\x$ in $\schedule$. That is, there is no write operation $b_i$ on $\x$ in $\schedule$ with $\lastv[\schedule]{\x} \schvord b_i$.
Two schedules $\schedule$ and $\schedule'$ are \emph{view-equivalent} if they are over the same set $\transset$ of transactions and for every read operation $b_i$ in $\transset$, $\schvf[\schedule](b_i) = \schvf[\schedule'](b_i)$ and $\lastv[\schedule]{\x} = \lastv[\schedule']{\x}$ for each 
$\x$ occurring in a write operation in a transaction in $s$. In other words, view-equivalence requires the two schedules to observe identical versions for each read operation and to install the same last versions for each object.

\begin{definition}
    A schedule $\schedule$ is \emph{view-serializable} if it is view-equivalent to a single-version serial schedule.
\end{definition}

\begin{theorem}[\cite{DBLP:books/cs/Papadimitriou86}]
    \label{theo:vs:npcomplete}
    Deciding whether a schedule $\schedule$ is view serializable is \NP-complete, even if\/ $\schedule$ is restricted to single-version schedules.
\end{theorem}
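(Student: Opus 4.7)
The plan is to establish membership in \NP and \NP-hardness separately, in the spirit of Papadimitriou's original argument~\cite{DBLP:books/cs/Papadimitriou86}, and to observe that the hardness construction can be taken single-version.

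For membership in \NP, I would give a nondeterministic polynomial-time algorithm that guesses a permutation $\pi$ of the transactions in $\transset$ as a polynomial-size certificate. From $\pi$ one builds the corresponding single-version serial schedule $\schedule_\pi$ in polynomial time (the version order $\schvord[\schedule_\pi]$ is determined by $\pi$, and each read observes the most recent preceding write). The algorithm then checks view-equivalence of $\schedule$ and $\schedule_\pi$ by directly comparing, for every read operation $b_i$, whether $\schvf[\schedule](b_i) = \schvf[\schedule_\pi](b_i)$, and for every object $\x$ appearing in a write, whether $\lastv[\schedule]{\x} = \lastv[\schedule_\pi]{\x}$. Both checks are immediate from the definitions and run in polynomial time.

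For \NP-hardness, I would reduce from a suitable \NP-complete variant of SAT (e.g., 3SAT or the non-circular SAT variant Papadimitriou uses). The encoding introduces, for each Boolean variable $x$, a pair of transactions writing to a dedicated object in such a way that any view-equivalent serial schedule must place one of them before the other, thereby encoding a truth value for $x$. For each clause, a clause gadget is introduced consisting of transactions whose read and write operations on designated auxiliary objects are satisfiable (in the sense of being view-equivalent to some serial schedule) if and only if at least one literal in the clause is true under the chosen assignment. Additional "threading" transactions tie the variable gadgets and clause gadgets together so that only \emph{consistent} assignments give rise to view-equivalent serial schedules. A careful accounting then yields that the constructed schedule is view-serializable iff the input formula is satisfiable, and the construction is polynomial in the size of the formula.

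The main obstacle is the design of the clause gadget and the global consistency links: one must ensure no spurious serial order (not corresponding to a consistent assignment) can be view-equivalent to the constructed schedule, which typically requires introducing enough "fresh" objects and forcing transactions so that the relevant \textit{reads-from} and \textit{final-write} relations pin down each variable's truth value uniquely. Finally, to get the strengthening to single-version inputs, I would verify that the schedule produced by the reduction can be chosen to be single-version: define $\schvord$ compatible with $\schord$ and let every read observe the latest preceding write on its object. Since view-equivalence depends only on the reads-from and final-write relations induced by $(\schop,\schord,\schvord,\schvf)$, the reduction's correctness is unaffected, and hardness therefore persists on single-version schedules.
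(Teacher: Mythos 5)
Your NP-membership argument is fine and matches the standard one: a serial order of the transactions is a polynomial certificate, and view-equivalence (same reads-from value for every read, same final write per object) is checkable in polynomial time. The problem is the hardness half, which is where all the technical content of this theorem lives, and which your proposal does not actually supply. You describe variable gadgets, clause gadgets, and ``threading'' transactions only at the level of what they should accomplish, and you yourself flag that ``the main obstacle is the design of the clause gadget and the global consistency links.'' That obstacle is the proof; deferring it means no reduction has been exhibited, and there is no routine way to fill it in --- ensuring that no spurious serial order is view-equivalent to the constructed schedule is exactly the delicate part. Note that the paper does not reprove this statement (it cites Papadimitriou), but its proof of the closely related Theorem~\ref{thm:vs:rc:si:np:hard} shows the intended shape of such an argument: one reduces from \emph{polygraph acyclicity} rather than directly from SAT, and the gadgets then come almost for free --- one fresh object per arc $(w,u)$, read by $\trans[w]$ and written by $\trans[u]$, forces $\trans[w]$ before $\trans[u]$ in any view-equivalent serial schedule; one fresh object per choice $(u,v,w)$, written by an initializing transaction, read by $\trans[u]$ and $\trans[w]$, and overwritten by $\trans[v]$ and a finalizing transaction, forces $\trans[v]$ outside the interval between $\trans[w]$ and $\trans[u]$. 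Reducing from SAT directly would in effect require you to re-derive the polygraph machinery inside your clause gadgets.

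Your final paragraph also contains a step that would fail as stated. You propose to obtain the single-version restriction by redefining $\schvord$ to be compatible with $\schord$ and letting every read observe the latest preceding write, and you claim ``the reduction's correctness is unaffected'' because view-equivalence depends only on the reads-from and final-write relations. But changing the version function \emph{changes} the reads-from relation, and that relation is precisely what encodes the constraints of your reduction; correctness is therefore not preserved automatically. The single-version property has to be engineered into the construction from the start, by arranging the operation order so that the version each read is intended to observe really is the most recent preceding write on its object. (In Papadimitriou's native formalism schedules are single-version by definition, so his construction needs no conversion; in the multiversion formalism of this paper, the analogous construction for Theorem~\ref{thm:vs:rc:si:np:hard} achieves this with a three-phase layout --- initializing writers, then the node transactions, then finalizing writers --- so that read-last-committed yields exactly the intended versions.) Without that, the strengthening ``even for single-version schedules'' is not established by your argument.
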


The following Theorem extends a well-known result~\cite{DBLP:books/cs/Papadimitriou86} for single-version schedules towards multiversion schedules:
\begin{theorem}
    \label{theo:cs:implies:vs}
    If a schedule $\schedule$ is conflict-serializable, then it is view-serializable.
\end{theorem}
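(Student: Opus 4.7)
The plan is to strengthen the conclusion: if $s$ is conflict-equivalent to a single-version serial schedule $s'$, then $s$ is in fact view-equivalent to the same $s'$, and hence view-serializable. To establish view-equivalence, two requirements must be met: (V1) $\schvf[\schedule](b) = \schvf[s'](b)$ for every read $b$, and (V2) $\lastv[\schedule]{\x} = \lastv[s']{\x}$ for every object $\x$ written in $s$. I would prove these by translating the ww/wr/rw dependencies preserved by conflict-equivalence into statements about $\schvf$ and $\lastv[]{}$.

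For (V2), fix an object $\x$ and let $a$ be the $\schvord[\schedule]$-maximum write on $\x$ in $s$. For any other write $c$ on $\x$: if $c$ comes from a different transaction than $a$, then $c \schvord[\schedule] a$ yields a ww-dependency $\dependson{c}{a}$, which by conflict-equivalence must also hold in $s'$, giving $c \schvord[s'] a$; if $c$ is in the same transaction as $a$, then the general requirement that $\schvord$ agrees with the internal order of each transaction on writes directly gives $c \schvord[s'] a$. Hence $a$ is also $\schvord[s']$-maximum, which is exactly (V2).

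For (V1), fix a read $b_i = \R{\x}$ and set $a_j = \schvf[\schedule](b_i)$ and $a'_j = \schvf[s'](b_i)$; note that $a'_j$ is uniquely determined, since $s'$ is single-version serial. For each write $c_k$ on $\x$ from a transaction $T_k \neq T_i$, the position of $c_k$ relative to $a_j$ in $\schvord[\schedule]$ pins down exactly one dependency: a wr-dependency $\dependson{c_k}{b_i}$ (when $c_k = a_j$ or $c_k \schvord[\schedule] a_j$), or a rw-antidependency $\dependson{b_i}{c_k}$ (when $a_j \schvord[\schedule] c_k$). Conflict-equivalence transports each of these dependencies to $s'$, yielding $c_k = a'_j$ or $c_k \schvord[s'] a'_j$ in the first case and $a'_j \schvord[s'] c_k$ in the second. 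Combining (V2) with the fact that $a'_j$ is the immediate predecessor of $b_i$ on $\x$ in the single-version serial schedule $s'$ forces $a'_j = a_j$ whenever $a_j$ either equals $\sstart$ or lies in a transaction distinct from $T_i$.

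The delicate case, which I expect to be the main obstacle, is when $a_j$ is a write belonging to $T_i$ itself, since conflict-dependencies do not directly constrain intra-transaction reads. I would handle this by separately exploiting the seriality of $s'$: all operations of $T_i$ appear consecutively in $s'$, so $a'_j$ is either the latest intra-$T_i$ write on $\x$ preceding $b_i$ in $T_i$, or the last write on $\x$ from an earlier transaction in the serial order (or $\sstart$). The dependency analysis above locates $b_i$ relative to every other-transaction write on $\x$ in $s'$ in exactly the same way as in $s$; combining this with the fixed internal order of $T_i$ forces $a'_j$ to coincide with $a_j$ in this case as well, completing (V1).
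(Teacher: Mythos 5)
Your overall strategy is the same as the paper's: the paper proves that conflict-equivalence implies view-equivalence (for an arbitrary pair of schedules, hence in particular for $\schedule$ and the single-version serial schedule witnessing conflict-serializability), deriving the last-version condition from preserved ww-dependencies and the version-function condition from preserved wr-/rw-dependencies. Your (V2) is correct and matches that argument, and your (V1) analysis is correct whenever $\trans[i]$ itself has no write on $\x$ preceding $b_i$.

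The genuine gap is precisely your ``delicate case.'' Conflicting operations are by definition operations of \emph{different} transactions, so conflict-equivalence says nothing about how $b_i$ relates to writes of its own transaction, and your claim that seriality of $\schedule'$ plus the internal order of $\trans[i]$ ``forces'' $a'_j = a_j$ is asserted rather than proved --- and it is false in the generality the paper insists on, where a transaction may write the same object several times. Concretely, let $\trans[1]$ write $\x$ twice and then read $\x$, and let $\schedule$ be the schedule over $\{\trans[1]\}$ in which the read observes the \emph{first} write. There are no conflicting pairs at all, so $\schedule$ is conflict-equivalent to the single-version serial schedule over $\trans[1]$; yet in every single-version serial schedule the read must observe the \emph{second} write, so $a'_j \neq a_j$ and no dependency-chasing can conclude otherwise. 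The same issue contaminates your main case: even when $a_j = \sstart$, if $\trans[i]$ wrote $\x$ before $b_i$, the single-version requirement forces $a'_j$ to be that intra-transaction write, not $\sstart$ (so the correct dichotomy is on whether $\trans[i]$ writes $\x$ before $b_i$, not on where $a_j$ lies). Closing this case requires an additional hypothesis on version functions --- e.g., that a read observes the most recent preceding write of its own transaction when one exists, or that each transaction writes each object at most once --- not further use of conflict-equivalence. For what it is worth, the paper's own proof tacitly makes exactly this assumption: it forms dependencies between $b_i$ and $\schvf[\schedule](b_i)$, $\schvf[\schedule'](b_i)$ without checking that these operations lie in distinct transactions. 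So you have correctly located the one genuinely delicate point, but your proposed resolution does not resolve it, and as stated the forcing claim is wrong.
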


\begin{toappendix}
    \subsection{Proof for Theorem~\ref{theo:cs:implies:vs}}
\begin{proof}
    We argue that if two schedules $\schedule$ and $\schedule'$ over a set of transactions $\transset$ are conflict-equivalent, then they are view-equivalent. The desired result then follows immediately.
    Notice that since $\schedule$ and $\schedule'$ are conflict-equivalent, we have $b_i \schvord a_j$ iff $b_i \schvord[\schedule'] a_j$ for each pair of conflicting write operations $a_j$ and $b_i$ occurring in $\transset$. Indeed, otherwise $a_j$ would depend on $b_i$ in one schedule, whereas $b_i$ would depend on $a_j$ in the other schedule. As an immediate consequence, we have that $\lastv[\schedule]{\x} = \lastv[\schedule']{\x}$ for each 
    object $\x$ occurring in a write operation in $\transset$.

    It remains to argue that $\schvf[\schedule](b_i) = \schvf[\schedule'](b_i)$ for each read operation $b_i$ occurring in $\transset$. Assume towards a contradiction that there is a read operation $b_i$ in $\transset$ with $\schvf[\schedule](b_i) \neq \schvf[\schedule'](b_i)$, and let $c = \schvf[\schedule](b_i)$ and $c' = \schvf[\schedule'](b_i)$. We first consider the case where $c \schvord[\schedule] c'$.
    Then, $\dependson[\schedule]{b_i}{c'}$ but $\dependson[\schedule']{c'}{b_i}$.
    For the case where $c' \schvord[\schedule] c$, we recall that $c' \schvord[\schedule'] c$ must hold as well. Analogous to the previous case, it now follows that $\dependson[\schedule']{b_i}{c}$ but $\dependson[\schedule]{c}{b_i}$.
    In both cases, $\schedule$ and $\schedule'$ are not conflict-equivalent, leading to the intended contradiction.
\end{proof}
\end{toappendix}

The opposite direction does not hold as the next example shows.

\begin{example}
    \label{ex:vs:not:cs}  
    Consider the schedule $\schedule_2$ over a set of three transactions $\transset = \{\trans[1], \trans[2], \trans[3]\}$ visualized in Figure~\ref{fig:ex:viewser}. This schedule is not conflict-serializable, witnessed by the cycle in $\seg{\schedule_2}$ given in Figure~\ref{fig:ex:sergraph}.
    However, $\schedule_2$ is view-serializable, as it is view equivalent to the single-version serial schedule $\schedule': \trans[1] \cdot \trans[2] \cdot \trans[3]$. Notice in particular that $\schvf[\schedule_2](\R[1]{\x}) = \schvf[\schedule'](\R[1]{\x}) =  \sstart$ and that $\trans[3]$ installs the last version of objects $\x$ and $\y$ in both schedules.
    Notice that view-serializability does not impose any restrictions on the ordering of $\W[1]{\x}$ and $\W[2]{\x}$ as $\x$ is not read by $T_3$.
    \hfill $\Box$
\end{example}

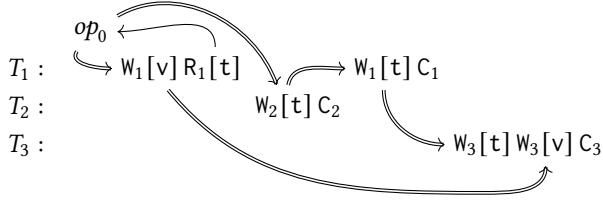
\begin{figure}
    \begin{center}
        % Note: the "remember picture" option is needed for the subnode command
        % Because of this, no other tikz pictures can use the same node labels!
        \begin{tikzpicture}[remember picture]
            \node[anchor=west](s2node0) at (0,0.5) {$\sstart$};
            \node[anchor=west] at (0,0)
            {
                $
                \phantom{\sstart\,}\,\,
                \subnode{s2node11}{\W[1]{\y}}\,\subnode{s2node12}{\R[1]{\x}\,}\,\,
                \phantom{\W[2]{\x}\,\CT[2]\,}\,\,
                \subnode{s2node13}{\W[1]{\x}}\,{\CT[1]\,}\,\,
                \phantom{\W[3]{\x}\,\W[3]{\y}\,\CT[3]}\,\,
                $
            };
            \node[anchor=west] at (0,-0.5)
            {
                $
                \phantom{\sstart\,}\,\,
                \phantom{\W[1]{\y}\,\R[1]{x}\,}\,\,
                \subnode{s2node21}{\W[2]{\x}}{\,\CT[2]\,}\,\,
                \phantom{\W[1]{\x}\,\CT[1]\,}\,\,
                \phantom{\W[3]{\x}\,\W[3]{\y}\,\CT[3]}\,\,
                $
            };
            \node[anchor=west] at (0,-1)
            {
                $
                \phantom{\sstart\,}\,\,
                \phantom{\W[1]{\y}\,\R[1]{x}\,}\,\,
                \phantom{\W[2]{\x}\,\CT[2]\,}\,\,
                \phantom{\W[1]{\x}\,\CT[1]\,}\,\,
                \subnode{s2node31}{\W[3]{\x}}\,\subnode{s2node32}{\W[3]{\y}}{\,\CT[3]}\,\,
                $
            };

            % version function
            %\draw[->,solid,out=90,in=0] (s2node12) to node[fill=white,inner sep=2pt] {\footnotesize{$\schvf[\schedule_2]$}} (s2node0);
            \draw[->,solid,out=90,in=0] (s2node12) to (s2node0);

            % version order
            %\draw[->,solid,in=180,out=-125] (s2node0) to node[fill=white,inner sep=1,xshift=-3pt,yshift=-2pt] {\footnotesize{$\schvord[\schedule_2]$}} (s2node11);
            \draw[->,double,solid,in=180,out=-125] (s2node0) to (s2node11);
            %\draw[->,solid,in=110,out=30] (s2node0) to node[fill=white,inner sep=1pt,xshift=0pt,yshift=1pt] {\footnotesize{$\schvord[\schedule_2]$}} (s2node21);
            \draw[->,double,solid,in=110,out=30] (s2node0) to (s2node21);
            %\draw[->,solid,in=180,out=80] (s2node21) to node[fill=white,inner sep=.5pt,xshift=0pt,yshift=2pt] {\footnotesize{$\schvord[\schedule_2]$}} (s2node13);
            \draw[->,double,solid,in=180,out=80] (s2node21) to (s2node13);
            
            \coordinate[below=10pt of s2node31] (s2x);
            %\draw[->,solid] (s2node11) to[in=180,out=-50] node[fill=white] {\footnotesize{$\schvord[\schedule_2]$}} (s2x) to[in=-90,out=0] (s2node32);
            \draw[->,double,solid] (s2node11) to[in=180,out=-50] (s2x) to[in=-90,out=0] (s2node32);
            
            %\draw[->,solid,in=180,out=-90] (s2node13) to node[fill=white,inner sep=2] {\footnotesize{$\schvord[\schedule_2]$}} (s2node31);
            \draw[->,double,solid,in=180,out=-90] (s2node13) to (s2node31);
            
            % Transactions
            \node at (-0.5,0) {$\trans[1]:$};
            \node at (-0.5,-0.5) {$\trans[2]:$};
            \node at (-0.5,-1) {$\trans[3]:$};
        \end{tikzpicture}
    \end{center}
    \caption{\label{fig:ex:viewser}
    A schedule $\schedule_2$ with $\schvf[\schedule_2]$ (single lines) and $\schvord[\schedule_2]$ (double lines) represented through
    arrows.}
\end{figure}

\begin{figure}
    \begin{center}
        \begin{tikzpicture}
            \node[draw,circle] (T1) at (0,0) {$\trans[1]$};
            \node[draw,circle] (T2) at (1.5,0) {$\trans[2]$};
            \node[draw,circle] (T3) at (1.5,-1.5) {$\trans[3]$};
            \node[draw,circle] (T4) at (0,-1.5) {$\trans[4]$};
            \draw[->,bend left] (T1) to (T2);
            \draw[->,bend right] (T1) to (T4);
            \draw[->,bend left] (T2) to (T4);
            \draw[->,bend left] (T4) to (T2);
            \draw[->,bend left] (T2) to (T3);
            \draw[->,bend left] (T3) to (T4);

            \begin{scope}[xshift=4cm]
                
                \node[draw,circle] (s2T1) at (0,0) {$\trans[1]$};
                \node[draw,circle] (s2T2) at (1.8,0) {$\trans[2]$};
                \node[draw,circle] (s2T3) at (0.9,-1.5) {$\trans[3]$};
                \draw[->,bend right] (s2T1) to (s2T3);
                \draw[->,bend left] (s2T1) to (s2T2);
                \draw[->,bend left] (s2T2) to (s2T3);
                \draw[->,bend left] (s2T2) to (s2T1);
            \end{scope}
        \end{tikzpicture}
    \end{center}
    \caption{\label{fig:ex:sergraph}
        {Serialization graphs $\cg{\schedule_1}$ \emph{(left)} and $\cg{\schedule_2}$ \emph{(right)} for the schedules $\schedule_1$ and $\schedule_2$ presented in Figure~\ref{fig:ex:schedule} and Figure~\ref{fig:ex:viewser}.}
    }
\end{figure}
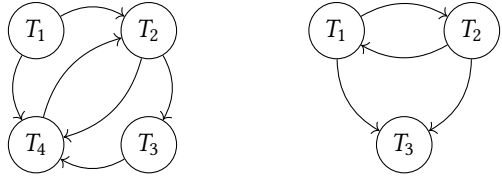

\subsection{Isolation levels}
\label{sec:isolevel}

Let $\Sigma_\transset$ be the set of all possible schedules over a set of transactions $\transset$. An \emph{allocation $\alloc$} for $\transset$ defines a set $\Sigma_\alloc \subseteq \Sigma_\transset$. A schedule $\schedule$ over $\transset$ is \emph{allowed under $\alloc$} if $\schedule \in \Sigma_\alloc$. We furthermore assume that $\alloc$ implies an allocation for each set of transactions $\transset' \subseteq \transset$, and write $\alloc[\transset']$ to denote the allocation obtained by restricting $\alloc$ to the transactions in $\transset'$.
Typically, the set of allowed schedules is defined through the isolation level assigned to each transaction. To this end, let $\isolationlevel$ be a class of isolation levels.
An \emph{$\isolationlevel$-allocation $\alloc$} for a set of transactions $\transset$ is an allocation to which we relate a function $f_\alloc$ mapping each transaction $\trans[] \in \transset$ onto an isolation level $f_\alloc(\trans[]) \in \isolationlevel$. As a slight abuse of notation, we will frequently denote $f_\alloc(\trans[])$ by $\alloc(\trans[])$. Intuitively, $\Sigma_\alloc$ then contains all schedules that can be obtained by executing the transactions in $\transset$ under the isolation levels prescribed by $\alloc$.

In this paper, we consider the following isolation levels: read committed (\mvrc), snapshot isolation (\si), and serializable snapshot isolation (\ssi).
Before we define when a schedule is allowed under a $\isopostgres$-allocation, we 
introduce some necessary terminology.
Some of these notions are illustrated in Example~\ref{ex:schedule} below.

Let $\schedule$ be a schedule for a set $\transset$ of transactions.
Two transactions $\trans[i], \trans[j] \in \transset$ are said to be \emph{concurrent} in $\schedule$ when their execution overlaps. That is,
if $\tfirst(\trans[i]) \schords \CT[j]$ and $\tfirst(\trans[j]) \schords \CT[i]$.
We say that a write operation $\W[j]{\x}$ in a transaction $\trans[j] \in \transset$ \emph{respects the commit order of $\schedule$} if the version of $\x$ written by $\trans[j]$ is installed after all versions of $\x$ installed by transactions committing before $\trans[j]$ commits, but before all versions of $\x$ installed by transactions committing after $\trans[j]$ commits. More formally, if for every write operation $\W[i]{\x}$ in a transaction $\trans[i] \in \transset$ different from $\trans[j]$ we have $\W[j]{\x} \schvord \W[i]{\x}$ iff $\CT[j] \schords \CT[i]$.   
We next define when a read operation $a\in \trans[]$ reads the last committed version relative to a specific operation. For \rc this operation is $a$ itself while for \si this operation is $\tfirst(\trans[])$.
{Intuitively, these definitions enforce that read operations in transactions allowed under \rc act as if they observe a snapshot taken right before the read operation itself, while under \si they observe a snapshot taken right before the first operation of the transaction.}
A read operation $\R[j]{\x}$ in a transaction $\trans[j] \in \transset$ is \emph{read-last-committed in $\schedule$ relative to an operation $a_j \in \trans[j]$} (not necessarily different from $\R[j]{\x}$) if the following holds:
\begin{itemize}
    \item $\schvf(\R[j]{\x}) = \sstart$ or $\schvf(\R[j]{\x}) \in \trans[i]$ with $\CT[i] \schords a_j$ for some $\trans[i] \in \transset$; and
    \item there is no write operation $\W[k]{\x} \in \schop$ with $\CT[k] \schords a_j$ and $\schvf(\R[j]{\x}) \schvord \W[k]{\x}$.
\end{itemize}
The first condition says that $\R[j]{\x}$ either reads the initial version or a committed version, while the second condition states that
$\R[j]{\x}$ observes the most recently committed version of $\x$ (according to $\schvord$).
A transaction $\trans[j] \in \transset$ \emph{exhibits a concurrent write in $\schedule$} if there {is another transaction $\trans[i]\in\transset$ and there} are two write operations $b_i$ and $a_j$ in $\schedule$ on the same object with $b_i \in \trans[i]$, $a_j \in \trans[j]$ and $\trans[i] \neq \trans[j]$ such that $b_i \schords a_j$ and $\tfirst(\trans[j]) \schords \CT[i]$. That is, transaction $\trans[j]$ writes to an object that has been modified earlier by a concurrent transaction $\trans[i]$.

A transaction $\trans[j] \in \transset$ \emph{exhibits a dirty write in $\schedule$} if there are two write operations $b_i$ and $a_j$ in $\schedule$ with $b_i \in \trans[i]$, $a_j \in \trans[j]$ and $\trans[i] \neq \trans[j]$ such that $b_i \schords a_j \schords \CT[i]$.
That is, transaction $\trans[j]$ writes to an object that has been modified earlier by $\trans[i]$, but $\trans[i]$ has not yet issued a commit.
Notice that by definition a transaction exhibiting a dirty write always exhibits a concurrent write.
{Transaction $\trans[4]$ in Figure~\ref{fig:ex:schedule} exhibits a concurrent write, since it writes to $\x$, which has been modified earlier by a concurrent transaction $\trans[2]$. However, $\trans[4]$ does not exhibit a dirty write, since $\trans[2]$ has already committed before $\trans[4]$ writes to $\x$.}

\begin{definition}
    Let $\schedule$ be a schedule over a set of transactions $\transset$.
    A transaction $\trans[i] \in \transset$ is \emph{allowed under isolation level read committed (\mvrc) in $\schedule$} if:
    \begin{itemize}
        \item each write operation in $\trans[i]$ respects the commit order of $\schedule$;
        \item each read operation $b_i \in \trans[i]$ is read-last-committed in $\schedule$ relative to $b_i$; and
        \item $\trans[i]$ does not exhibit dirty writes in $\schedule$.
    \end{itemize}
    A transaction $\trans[i] \in \transset$ is \emph{allowed under isolation level snapshot isolation (\si) in $\schedule$} if:
    \begin{itemize}
        \item each write operation in $\trans[i]$ respects the commit order of $\schedule$;
        \item each read operation in $\trans[i]$ is read-last-committed in $\schedule$ relative to $\tfirst(\trans[i])$; and
        \item $\trans[i]$ does not exhibit concurrent writes in $\schedule$.
    \end{itemize}
\end{definition}

We then say that the schedule $s$ is allowed under \mvrc (respectively, \si) if every transaction is allowed under \mvrc (respectively, \si) in $s$. The latter definitions correspond to the ones in the literature (see, e.g., \cite{DBLP:conf/pods/Fekete05,vldbpaper}).
{We emphasize that our definition of \rc is based on concrete implementations over multiversion databases, found in e.g. PostgreSQL, and should therefore not be confused with different interpretations of the term Read Committed, such as lock-based implementations~\cite{DBLP:conf/sigmod/BerensonBGMOO95} or more abstract specifications covering a wider range of concrete implementations (see, e.g.,~\cite{DBLP:conf/icde/AdyaLO00}). In particular, abstract specifications such as~\cite{DBLP:conf/icde/AdyaLO00} do not require the read-last-committed property, thereby facilitating implementations in distributed settings, where read operations are allowed to observe outdated versions. When studying robustness, such a broad specification of \rc is not desirable, since it allows for a wide range of schedules that are not conflict-serializable.
We furthermore point out that our definitions of \rc and \si are not strictly weaker forms of conflict-serializability or view-serializability. That is, a conflict-serializable (respectively, view-serializable) schedule is not necessarily allowed under \rc and \si as we discuss further in Section~\ref{sec:deciding:viewser}.
}

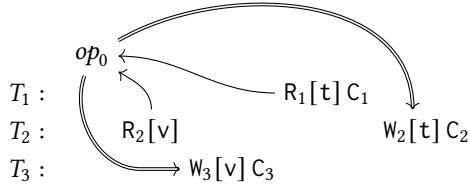
\begin{figure}
    \begin{center}
        % Note: the "remember picture" option is needed for the subnode command
        % Because of this, no other tikz pictures can use the same node labels!
        \begin{tikzpicture}[remember picture]
            \node[anchor=west](sdnode0) at (0,0.5) {$\sstart$};
            \node[anchor=west] at (0,0)
            {
                $
                \phantom{\sstart\,}\,\,
                \phantom{{\R[2]{\y}}}\,\,
                \phantom{{\W[3]{\y}}\,\CT[3]}\,\,
                \subnode{sdnode11}{\R[1]{\x}}\,{\CT[1]\,}\,\,
                \phantom{{\W[2]{\x}}\,\CT[2]}\,\,
                $
            };
            \node[anchor=west] at (0,-0.5)
            {
                $
                \phantom{\sstart\,}\,\,
                \subnode{sdnode21}{\R[2]{\y}}\,\,
                \phantom{{\W[3]{\y}}\,\CT[3]}\,\,
                \phantom{{\R[1]{\x}}\,{\CT[1]\,}}\,\,
                \subnode{sdnode22}{\W[2]{\x}}\,\CT[2]\,\,
                $
            };
            \node[anchor=west] at (0,-1)
            {
                $
                \phantom{\sstart\,}\,\,
                \phantom{{\R[2]{\y}}}\,\,
                \subnode{sdnode31}{\W[3]{\y}}\,\CT[3]\,\,
                \phantom{{\R[1]{\x}}\,{\CT[1]\,}}\,\,
                \phantom{{\W[2]{\x}}\,\CT[2]}\,\,
                $
            };

            % version function
            \draw[->,solid,out=180,in=0] (sdnode11) to (sdnode0);
            \draw[->,solid,out=90,in=-30] (sdnode21) to (sdnode0);

            % version order
            \draw[->,double,solid,in=90,out=30] (sdnode0) to (sdnode22);
            \coordinate[left=15pt of sdnode31] (sdx);
            \draw[->,double,solid,in=180,out=-110] (sdnode0) to[in=180,out=-110] (sdx) to[in=180,out=0] (sdnode31);

            % Transactions
            \node at (-0.5,0) {$\trans[1]:$};
            \node at (-0.5,-0.5) {$\trans[2]:$};
            \node at (-0.5,-1) {$\trans[3]:$};
        \end{tikzpicture}
    \end{center}
    \caption{\label{fig:ex:dangerousstructure}
    A schedule $\schedule$ with $\schvf$ (single lines) and $\schvord$ (double lines) represented through
    arrows. The three transactions form a dangerous structure $\trans[1] \rightarrow \trans[2] \rightarrow \trans[3]$.}
\end{figure}

While \mvrc and \si are defined on the granularity of a single transaction,
\ssi enforces a global condition on the schedule as a whole.
For this, recall the concept of dangerous structures from~\cite{DBLP:conf/sigmod/CahillRF08}:
three transactions $\trans[1],\trans[2],\trans[3] \in \transset$ (where $\trans[1]$ and $\trans[3]$ are not necessarily different) form a \emph{dangerous structure} $\trans[1] \rightarrow \trans[2] \rightarrow \trans[3]$ in $\schedule$ if:
\begin{itemize}
    \item there is a rw-antidependency from $\trans[1]$ to $\trans[2]$ and from $\trans[2]$ to $\trans[3]$ in $\schedule$;
    \item $\trans[1]$ and $\trans[2]$ are concurrent in $\schedule$;
    \item $\trans[2]$ and $\trans[3]$ are concurrent in $\schedule$;
    \item ${\CT[3] \schord \CT[1]}$ and $\CT[3] \schords \CT[2]$; and
    \item {if $\trans[1]$ only contains read operations, then $\CT[3] \schords \tfirst(\trans[1])$.}
\end{itemize}

An example of a dangerous structure is visualized in Figure~\ref{fig:ex:dangerousstructure}. Note that this definition of dangerous structures slightly extends the one in~\cite{DBLP:conf/sigmod/CahillRF08}, where it is not required for $\trans[3]$ to commit before $\trans[1]$ and $\trans[2]$. In the full version~\cite{DBLP:journals/tods/CahillRF09} of that paper, it is shown that such a structure can only lead to non-serializable schedules if $\trans[3]$ commits first.
{Furthermore, Ports and Grittner~\cite{PortsGrittner2012} show that if $\trans[1]$ is a read-only transaction, this structure can only lead to non-serializable behaviour if $\trans[3]$ commits before $\trans[1]$ starts.} Actual implementations of \ssi (e.g., PostgreSQL~\cite{PortsGrittner2012}) therefore include this optimization when monitoring for dangerous structures to reduce the number of aborts due to false positives.

We are now ready to define when a schedule is allowed under a  $\isopostgres$-allocation.

\begin{definition} \label{def:mixed:schedule}
    A schedule $\schedule$ over a set of transactions $\transset$ \emph{is allowed under an $\isopostgres$-allocation $\alloc$} over $\transset$ if:
    \begin{itemize}
        \item for every transaction $\trans[i] \in \transset$ with $\alloc(\trans[i]) = \mvrc$, $\trans[i]$ is allowed under \mvrc;
        \item for every transaction $\trans[i] \in \transset$ with $\alloc(\trans[i]) \in \{\si,\ssi\}$, $\trans[i]$ is allowed under \si; and
        \item there is no dangerous structure $\trans[i] \rightarrow \trans[j] \rightarrow \trans[k]$ in $\schedule$ formed by three (not necessarily different) transactions $\trans[i], \trans[j], \trans[k] \in \{\trans[] \in \transset \mid \alloc(\trans[]) = \ssi\}$.
    \end{itemize}
\end{definition}

We illustrate some of the just introduced notions through an example.
\begin{example}\label{ex:schedule}

    Consider the schedule $\schedule_1$ in Figure~\ref{fig:ex:schedule}.
    Transaction $T_1$ is concurrent with $T_2$ and $T_4$, but not with
    $T_3$; all other transactions are pairwise concurrent with each other.
    The second read operation of $T_4$ is a read-last-committed relative to
    itself but not relative to the start of $T_4$.
        {The read operation of $\trans[2]$ is read-last-committed relative to the start of $\trans[2]$, but not relative to itself, so an allocation mapping $\trans[2]$ to \rc is not allowed.}
    All other read operations
        {are read-last-committed relative to both themselves and the start of the corresponding transaction.}
    None of the transactions exhibits a dirty write.
    Only transaction $T_4$ exhibits a concurrent write (witnessed by the write
    operation in $T_2$). Due to this, an allocation mapping $T_4$ on
    \si or \ssi is not allowed.
    The transactions $T1\to T2 \to T3$ form a dangerous structure, therefore
    an allocation mapping all three transactions $T_1,T_2,T_3$ on \ssi is not
    allowed.
    All other allocations, that is, mapping $T_4$ on \rc, {$\trans[2]$ on \si or \ssi} and at least one of
    $T_1,T_2,T_3$ on \rc or \si, is allowed. 
       \hfill $\Box$
\end{example}

% !TeX root = main.tex

\section{Conflict- and View-Robustness}
\label{sec:robustness}

\subsection{Definitions and basic properties}
\label{sec:robustness:definitions}

We define the robustness property~\cite{DBLP:conf/concur/0002G16} (also called \emph{acceptability} in~\cite{DBLP:conf/pods/Fekete05,DBLP:journals/tods/FeketeLOOS05}), which guarantees serializability for all schedules over a given set of transactions for a given allocation.

\begin{definition}[Robustness]
    \label{def:robustness}
    A set of transactions $\transset$ is \emph{view-robust} (respectively, \emph{conflict-robust}) against an {allocation $\alloc$} for $\transset$
    if every schedule over every $\transset' \subseteq \transset$ that is allowed under $\alloc[\transset']$ is
    view-serializable (respectively, conflict-serializable).
\end{definition}

It is important to note that the above definition demands serializability for every schedule over \emph{every subset of} $\transset$. {View- and conflict-robustness correspond to view- and conflict-safety as defined by Yannakakis~\cite{DBLP:journals/jacm/Yannakakis84}.}

We can define a less stringent version as follows: 
\begin{definition}
A set of transactions $\transset$ is \emph{exact view-robust} (respectively, \emph{exact conflict-robust}) against an {allocation $\alloc$} for $\transset$
if every schedule over $\transset$ that is allowed under $\alloc[\transset]$ is
view-serializable (respectively, conflict-serializable).
\end{definition}
Exact conflict-robustness is what is used in, e.g., \cite{DBLP:conf/pods/Fekete05,DBLP:conf/pods/VandevoortKN23}.\footnote{In these works exact robustness is just called robustness.} However, for conflict-robustness the distinction does not matter as exact conflict-robustness is the same as conflict-robustness. Indeed, a schedule $\schedule'$ over a subset $\transset' \subseteq \transset$ that is not conflict-serializable can always be extended to a schedule $\schedule$ over $\transset$ that is not conflict-serializable by appending the remaining transactions in $\transset$ (those not appearing in $\transset'$) to $\schedule'$ in a serial fashion.\footnote{We %implicitly  
    assume that if $\schedule'$ is allowed under $\alloc[\transset']$ then the extended schedule $\schedule$ %obtained by appending the remaining transactions in a serial fashion 
    is allowed under $\alloc$ as well.
    %{This property holds for the isolation levels considered in this paper.}
    We formally prove this property in Lemma~\ref{lemma:nonconfser:extended} for the isolation levels considered in this paper.
    } 
Indeed, the cycle in the serialization graph cannot disappear by adding transactions to the schedule. For view-serializability, this argument does not work. In fact, the obtained 
schedule $\schedule$ can be view-serializable, even if $\schedule'$ is not view-serializable.  

As explained in the introduction, the relevance of robustness lies in its utility as a static property that is tested offline w.r.t.\ a small set of transaction programs. Here, any instantiation of any subset of such templates should be taken into account. Therefore,
view-robustness is the desirable property, not exact view-robustness. We next give an example 
of a set of transactions that is exact view-robust but not view-robust.

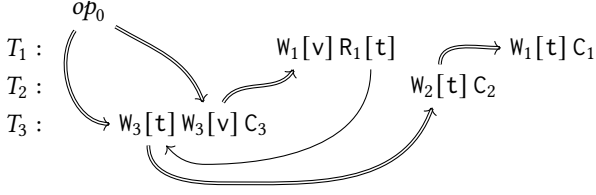
\begin{figure}
    \begin{center}
        % Note: the "remember picture" option is needed for the subnode command
        % Because of this, no other tikz pictures can use the same node labels!
        \begin{tikzpicture}[remember picture]
            \node[anchor=west](s3node0) at (0,0.5) {$\sstart$};
            \node[anchor=west] at (0,0)
            {
                $
                \phantom{\sstart\,}\,\,
                \phantom{\W[3]{\x}\,\W[3]{\y}\,\CT[3]}\,\,
                \subnode{s3node11}{\W[1]{\y}}\,\subnode{s3node12}{\R[1]{\x}\,}\,\,
                \phantom{\W[2]{\x}\,\CT[2]\,}\,\,
                \subnode{s3node13}{\W[1]{\x}}\,{\CT[1]\,}\,\,
                $
            };
            \node[anchor=west] at (0,-0.5)
            {
                $
                \phantom{\sstart\,}\,\,
                \phantom{\W[3]{\x}\,\W[3]{\y}\,\CT[3]}\,\,
                \phantom{\W[1]{\y}\,\R[1]{x}\,}\,\,
                \subnode{s3node21}{\W[2]{\x}}{\,\CT[2]\,}\,\,
                \phantom{\W[1]{\x}\,\CT[1]\,}\,\,
                $
            };
            \node[anchor=west] at (0,-1)
            {
                $
                \phantom{\sstart\,}\,\,
                \subnode{s3node31}{\W[3]{\x}}\,\subnode{s3node32}{\W[3]{\y}}{\,\CT[3]}\,\,
                \phantom{\W[1]{\y}\,\R[1]{x}\,}\,\,
                \phantom{\W[2]{\x}\,\CT[2]\,}\,\,
                \phantom{\W[1]{\x}\,\CT[1]\,}\,\,
                $
            };

            % version function
            %\draw[->,solid,out=90,in=0] (s3node12) to node[fill=white,inner sep=2pt] {\footnotesize{$\schvf[\schedule_2]$}} (s3node0);
            \coordinate[below right=10pt of s3node31] (s3y);
            \draw[->,solid] (s3node12) to[out=-90,in=0] (s3y) to[in=-50,out=-180] (s3node31);

            % version order
            %\draw[->,solid,in=180,out=-125] (s3node0) to node[fill=white,inner sep=1,xshift=-3pt,yshift=-2pt] {\footnotesize{$\schvord[\schedule_2]$}} (s3node11);
            \draw[->,double,solid,in=180,out=-125] (s3node0) to (s3node31);
            %\draw[->,solid,in=110,out=30] (s3node0) to node[fill=white,inner sep=1pt,xshift=0pt,yshift=1pt] {\footnotesize{$\schvord[\schedule_2]$}} (s3node21);
            \draw[->,double,solid,in=110,out=-30] (s3node0) to (s3node32);
            %\draw[->,solid,in=180,out=80] (s3node21) to node[fill=white,inner sep=.5pt,xshift=0pt,yshift=2pt] {\footnotesize{$\schvord[\schedule_2]$}} (s3node13);
            \draw[->,double,solid,in=-120,out=60] (s3node32) to (s3node11);
            
            \coordinate[below right=20pt of s3node31] (s3x);
            %\draw[->,solid] (s3node11) to[in=180,out=-50] node[fill=white] {\footnotesize{$\schvord[\schedule_2]$}} (s2x) to[in=-90,out=0] (s3node32);
            \draw[->,double,solid] (s3node31) to[in=180,out=-90] (s3x) to[in=-110,out=0] (s3node21);
            
            %\draw[->,solid,in=180,out=-90] (s3node13) to node[fill=white,inner sep=2] {\footnotesize{$\schvord[\schedule_2]$}} (s3node31);
            \draw[->,double,solid,in=180,out=90] (s3node21) to (s3node13);
            
            % Transactions
            \node at (-0.5,0) {$\trans[1]:$};
            \node at (-0.5,-0.5) {$\trans[2]:$};
            \node at (-0.5,-1) {$\trans[3]:$};
        \end{tikzpicture}
    \end{center}
    \caption{\label{fig:ex:viewser:prepend}
    A schedule $\schedule_4$ with $\schvf[\schedule_4]$ (single lines) and $\schvord[\schedule_4]$ (double lines) represented through
    arrows.}
\end{figure}

\begin{example}
    Reconsider the set of transactions $\transset = \{\trans[1], \trans[2], \trans[3]\}$ from Example~\ref{ex:vs:not:cs}.
    Let $\alloc$ be the $\isopostgres$-allocation for $\transset$ with $\alloc(\trans[i]) = \rc$ for each $\trans[i] \in \transset$.
    The schedule $\schedule_2$ over $\transset$ presented in Figure~\ref{fig:ex:viewser} is allowed under $\alloc$ but not conflict-serializable, thereby witnessing that $\transset$ is not (exact) conflict-robust against $\alloc$.
    
    However, $\transset$ is exact view-robust against $\alloc$ since every schedule $\schedule$ over $\transset$ allowed under $\alloc$ is view-serializable. Notice in particular that if $\schedule$ is allowed under $\alloc$, then $\trans[1]$ and $\trans[3]$ cannot be concurrent in $\schedule$, as this would always imply a dirty write. For every schedule $\schedule$ allowed under $\alloc$, we can therefore identify a view-equivalent single-version serial schedule $\schedule'$ over $\transset$, depending on $\schvf[\schedule](\R[1]{\x})$:
    \begin{itemize}
        \item if $\schvf[\schedule](\R[1]{\x}) = \sstart$, then $\schedule$ is view-equivalent to either $\trans[1] \cdot \trans[2] \cdot \trans[3]$ or $\trans[1] \cdot \trans[3] \cdot \trans[2]$, depending on whether $\trans[2]$ commits before $\trans[3]$ in $\schedule$;
        \item if $\schvf[\schedule](\R[1]{\x}) = \W[2]{\x}$, then $\schedule$ is view-equivalent to either $\trans[2] \cdot \trans[1] \cdot \trans[3]$ or $\trans[3] \cdot \trans[2] \cdot \trans[1]$, depending on whether $\trans[1]$ commits before $\trans[3]$ in $\schedule$; and
        \item if $\schvf[\schedule](\R[1]{\x}) = \W[3]{\x}$, then $\schedule$ is view-equivalent to either $\trans[3] \cdot \trans[1] \cdot \trans[2]$ or $\trans[2] \cdot \trans[3] \cdot \trans[1]$, depending on whether $\trans[1]$ commits before $\trans[2]$ in $\schedule$.
    \end{itemize}

    Even though $\transset$ is exact view-robust against $\alloc$, it is not view-robust against $\alloc$, as we will show next. Let $\schedule_3$ be the schedule over $\transset' = \{\trans[1], \trans[2]\}$ obtained by removing $\trans[3]$ from $\schedule_2$ in Figure~\ref{fig:ex:viewser}. Then, $\schedule_3$ is allowed under $\alloc[\transset']$ but not view-serializable. Indeed, $\trans[1]$ observes the initial version of object $\x$ and installs the last version of $\x$, and should therefore occur both before and after $\trans[2]$ in a view-equivalent single-version serial schedule, leading to the desired contradiction.

    {Notice in particular that $\schedule_3$ cannot be extended to a non-view-serializable schedule over $\transset$ by appending or prepending $\trans[3]$. Indeed, $\schedule_2$ in Figure~\ref{fig:ex:viewser} is the view-serializable schedule obtained by appending $\trans[3]$, whereas the schedule $\schedule_4$ obtained by prepending $\trans[3]$ is given in Figure~\ref{fig:ex:viewser:prepend}. As explained above, $\schedule_4$ is view-equivalent to the serial schedule $\trans[2] \cdot \trans[3] \cdot \trans[1]$. The crucial difference between $\schedule_3$ and $\schedule_4$ is that by prepending $\trans[3]$, transaction $\trans[1]$ no longer observes the initial version of object $\x$ but the version written by $\trans[3]$ instead. This allows $\trans[2]$ to be situated before $\trans[1]$ in a view-equivalent schedule, as long as $\trans[3]$ is situated between $\trans[2]$ and $\trans[1]$ in this view-equivalent schedule.}
    \hfill $\Box$
\end{example}

The next result shows that conflict-robustness always implies view-robustness but not vice-versa.
{
\begin{proposition} \label{prop:rob:basicproperty}
    \begin{enumerate}
        \item \label{basic:one} For every allocation $\alloc$ for a set of transactions $\transset$, if\/ $\transset$ is conflict-robust against $\alloc$, then $\transset$ is view-robust against $\alloc$.
    \item \label{basic:two} There is a set of transactions $\transset$ and an allocation $\alloc$ for $\transset$, such that $\transset$ is view-robust against $\alloc$ but not conflict-robust.  
\end{enumerate}
\end{proposition}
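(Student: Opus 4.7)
\medskip
\noindent\textbf{Proof plan.} For part~(\ref{basic:one}), the claim is a direct application of Theorem~\ref{theo:cs:implies:vs}. Suppose $\transset$ is conflict-robust against $\alloc$, and take any $\transset' \subseteq \transset$ together with any $\schedule \in \Sigma_{\alloc[\transset']}$. Conflict-robustness yields that $\schedule$ is conflict-serializable, and Theorem~\ref{theo:cs:implies:vs} then immediately gives that $\schedule$ is view-serializable. Hence $\transset$ is view-robust against $\alloc$. No further machinery is needed.

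For part~(\ref{basic:two}), the plan is to exhibit a concrete witness rather than to reason abstractly. I would reuse the set $\transset = \{\trans[1],\trans[2],\trans[3]\}$ and the schedule $\schedule_2$ of Example~\ref{ex:vs:not:cs}, which is already shown there to be view-serializable (view-equivalent to the serial schedule $\trans[1]\cdot\trans[2]\cdot\trans[3]$) while failing to be conflict-serializable (witnessed by the cycle in $\seg{\schedule_2}$ in Figure~\ref{fig:ex:sergraph}). I would then define an allocation $\alloc$ for $\transset$ by setting, for every $\transset' \subseteq \transset$, $\Sigma_{\alloc[\transset']}$ to be the collection of all single-version serial schedules over $\transset'$, and additionally including the single extra schedule $\schedule_2$ in $\Sigma_{\alloc[\transset]}$. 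Since the paper's notion of an allocation is simply a designated set of schedules per subset, this construction is admissible.

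The verification is then short. Every schedule appearing in any $\Sigma_{\alloc[\transset']}$ is either single-version serial (hence trivially view-serializable) or equal to $\schedule_2$ (view-serializable by Example~\ref{ex:vs:not:cs}), so $\transset$ is view-robust against $\alloc$. On the other hand, $\schedule_2 \in \Sigma_{\alloc[\transset]}$ is not conflict-serializable, so $\transset$ is not conflict-robust against $\alloc$.

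The main thing to be careful about is that view-robustness quantifies over \emph{every} subset $\transset' \subseteq \transset$, not just over the full set. This is precisely why the allocation is designed to restrict the schedules at every proper subset to serial schedules only; otherwise an isolated non-conflict-serializable schedule over $\transset$ could in principle be exposed by a non-view-serializable schedule over some $\transset' \subsetneq \transset$ (as happens, for instance, with the natural \rc-allocation on the same transactions, discussed in the ``exact view-robust but not view-robust'' example). Once the subset allocations are pinned down as above, the remaining work is purely bookkeeping.
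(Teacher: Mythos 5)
Your proof is correct and takes essentially the same approach as the paper: part~(\ref{basic:one}) is the same application of Theorem~\ref{theo:cs:implies:vs} (stated directly rather than by contraposition), and part~(\ref{basic:two}) rests on the same witness, the schedule $\schedule_2$ of Example~\ref{ex:vs:not:cs}, wrapped in an allocation tailored so that it is the only non-serializable behaviour allowed. The only cosmetic difference is in how that allocation is padded: the paper takes $\Sigma_{\alloc[\transset']}$ to be \emph{all} view-serializable schedules over each subset $\transset'$ (so view-robustness holds by definition), whereas you take only the single-version serial schedules and add $\schedule_2$ at the top level; both choices are admissible under the paper's abstract notion of allocation, and both immediately yield non-conflict-robustness via $\schedule_2$.
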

}

\begin{toappendix}
    \subsection{Proof for Proposition~\ref{prop:rob:basicproperty}}
\begin{proof} (1)
    The proof is by contraposition. Assume $\transset$ is not view-robust against $\alloc$. Then, there exists a schedule $\schedule$ over a set of transactions $\transset' \subseteq \transset$ that is allowed under $\alloc[\transset']$ but not view-serializable. By Theorem~\ref{theo:cs:implies:vs}, $\schedule$ is not conflict-serializable. Hence, $\transset$ is not conflict-robust against $\alloc$.

    \smallskip
    \noindent
    (2) 
    Consider the set of transactions $\transset$ from Example~\ref{ex:vs:not:cs}.
    Let $\alloc_\tmpvs$ be the allocation for $\transset$ that allows all schedules over subsets of $\transset$ that are view-serializable. That is, a schedule $\schedule$ over a subset $\transset' \subseteq \transset$ is allowed under $\alloc_\tmpvs[\transset']$ iff $\schedule$ is view-serializable. 
    Then, $\transset$ is trivially view-robust against $\alloc_\tmpvs$. However, $\transset$ is not conflict-robust against $\alloc_\tmpvs$ since the schedule $\schedule_2$ given in Figure~\ref{fig:ex:viewser} is allowed under $\alloc_\tmpvs$ but not conflict-serializable.      
\end{proof}
\end{toappendix}

\subsection{Generalized split schedules}
Towards identifying classes of isolation levels $\isolationlevel$ for which view-robustness and conflict-robustness coincide, we first introduce the notion of a \emph{generalized split schedule}.

In the next definition, we use $\prefix{\trans[]}{b}$ for an operation $b$ in $\trans[]$ to denote the restriction of $\trans[]$ to all operations that are before or equal to $b$ according to $\leq_{\trans[]}$. Similarly, we denote by $\postfix{\trans[]}{b}$ the restriction of $\trans[]$ to all operations that are strictly after $b$ according to $\leq_{\trans[]}$.

\begin{definition}[Generalized split schedule]\label{def:generalizedsplitschedule}
    Let $\transset = \{\trans[1], \trans[2], \ldots \trans[n]\}$ be a set of transactions.
    A \emph{generalized split schedule} $\schedule$ over $\transset$ is a multiversion schedule over $\transset$ that has the following form:
    $$\prefix{\trans[1]}{b_1}\cdot \trans[2]\cdot \ldots \cdot \trans[n] \cdot \postfix{\trans[1]}{b_1},$$
    where
    \begin{enumerate}
        \item \label{c:1} for each pair of transactions $\trans[i], \trans[j] \in \transset$ with either $j = i+1$ or $i = n$ and $j = 1$, there is a $b_i \in \trans[i]$ and $a_j \in \trans[j]$ such that $\dependson{b_i}{a_j}$;
        \item \label{c:2} if for some pair of transactions $\trans[i], \trans[j] \in \transset$ an operation $a_j \in \trans[j]$ depends on an operation $b_i \in \trans[i]$ in $\schedule$, then $j = i+1$, or $i = n$ and $j = 1$; and
        \item \label{c:3} each write operation occurring in $\schedule$ respects the commit order of $\schedule$.
    \end{enumerate}
\end{definition}

The conditions above correspond to the following. Condition (\ref{c:1}) says that there is a cyclic chain of dependencies; condition (\ref{c:2}) stipulates that this cycle is minimal; and, condition (\ref{c:3}) 
enforces that transactions writing to the same object install versions in the same order as they commit.

The next lemma says that a generalized split schedule functions as a counterexample schedule for conflict-robustness as well as for view-robustness. It forms a basic building block for Theorem~\ref{theo:condition}.
\begin{lemma}
    \label{lem:generalsplit}
    Let $\schedule$ be a generalized split schedule for a set of transactions $\transset$. Then, $\schedule$ is not conflict-serializable and not view-serializable.
\end{lemma}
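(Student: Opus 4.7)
The proof splits naturally along the two assertions.

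For conflict-serializability, condition~(\ref{c:1}) immediately produces a cycle $\trans[1] \to \trans[2] \to \cdots \to \trans[n] \to \trans[1]$ in the serialization graph $\seg{\schedule}$, so Theorem~\ref{theo:not-conflict-serializable} rules out conflict-serializability. This part requires no further work.

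For view-serializability I argue by contradiction. Assume $\schedule$ is view-equivalent to a single-version serial schedule $\schedule'$ over $\transset$, whose serial order I denote by $\sigma$. The goal is an ordering lemma: for every cyclic dependency $\dependson[\schedule]{b_i}{a_{i+1}}$ (indices taken modulo $n$), the transaction $\trans[i]$ must precede $\trans[i+1]$ in $\sigma$. Once this lemma is established, the $n$ cyclic precedences compose into $\sigma^{-1}(1) < \sigma^{-1}(2) < \cdots < \sigma^{-1}(n) < \sigma^{-1}(1)$, which is impossible, and the contradiction follows.

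The ordering lemma is proved by a case split on the dependency type of each cyclic edge. For a wr-dependency $\W[i]{\x} \to \R[i+1]{\x}$, view-equivalence preserves $v_\schedule(\R[i+1]{\x})$, so $\R[i+1]{\x}$ reads the same write in $\schedule'$; combining this with condition~(\ref{c:2}), which excludes shortcut dependencies via non-consecutive transactions, and condition~(\ref{c:3}), which aligns $\schvord[\schedule]$ with the commit order of $\schedule$, forces $\trans[i]$ before $\trans[i+1]$ in $\sigma$. For an rw-antidependency $\R[i]{\x} \to \W[i+1]{\x}$, view-equivalence requires the same version to be read by $\R[i]{\x}$ in $\schedule'$; were $\sigma$ to place $\trans[i+1]$ before $\trans[i]$, either $\W[i+1]{\x}$ would become the most recent write before $\R[i]{\x}$ in $\schedule'$ (contradicting the preserved read value, since no non-consecutive transaction can intervene by condition~(\ref{c:2})), or $\W[i+1]{\x}$ would displace the preserved last-writer of $\x$, again violating view-equivalence. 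For a ww-dependency $\W[i]{\x} \to \W[i+1]{\x}$, conditions~(\ref{c:2}) and~(\ref{c:3}) together pin down the full set of transactions in $\transset$ writing to $\x$ as a consecutive block of the cycle ordered by commit, so the last-writer clause of view-equivalence transfers this commit ordering to $\sigma$, once more placing $\trans[i]$ before $\trans[i+1]$.

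The principal obstacle is the ww-case. View-equivalence is blind to intermediate writes that are overwritten without being read, so a single ww-dependency does not, on its own, pin down a serial ordering between its endpoints. Surmounting this relies on using conditions~(\ref{c:2}) and~(\ref{c:3}) jointly to reconstruct the global commit/version-order chain on $\x$, to identify the genuine last writer in $\schedule$, and only then to invoke last-writer preservation to transmit the ordering into $\sigma$. With the ordering lemma established in all three cases, combining the $n$ cyclic precedence constraints contradicts the fact that $\sigma$ is a linear order, completing the proof.
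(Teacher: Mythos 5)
Your plan is correct and follows the paper's proof essentially verbatim: condition~(\ref{c:1}) plus Theorem~\ref{theo:not-conflict-serializable} rules out conflict-serializability, and view-serializability is refuted by the same ordering lemma (each cyclic dependency forces a precedence of $\trans[i]$ before $\trans[j]$ in any view-equivalent single-version serial schedule), established by the same case split on dependency type using read-version preservation, last-writer preservation, and conditions~(\ref{c:2}) and~(\ref{c:3}) to exclude additional writers. The only difference is compression: the paper fully works out the sub-cases you gloss over --- the wr-case with $\schvf(a_j) \neq b_i$ reduces to the ww-case, the rw-case where $\trans[i]$ itself rewrites the object forces $n=2$ and a careful analysis of $\schvf(b_i)$, and the ww-case yields exactly the two endpoint transactions as writers (not a longer ``consecutive block'', which is just as well, since last-writer preservation alone would not order the interior pairs of such a block) --- but each of these follows from the ingredients you name.
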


\begin{toappendix}
    \subsection{Proof for Lemma~\ref{lem:generalsplit}}
\begin{proof}
    It immediately follows that $s$ is not conflict-serializable. Indeed, condition~(\ref{c:1}) stipulates that there is a dependency from each transaction $\trans[i]$ in $\schedule$ to the transaction $\trans[i+1]$, as well as a dependency from $\trans[n]$ to $\trans[1]$. This means in particular that there is a cycle in $\seg{\schedule}$. Hence, by Theorem~\ref{theo:not-conflict-serializable}, $\schedule$ is not conflict-serializable.

    The argument that $\schedule$ is not view-serializable is more involved. We argue that for each dependency $\dependson{b_i}{a_j}$ in $\schedule$ between some pair of transactions $\trans[i]$ and $\trans[j]$ in $\transset$, the transaction $\trans[i]$ must precede $\trans[j]$ in every single-version serial schedule $\schedule'$ that is view-equivalent to $\schedule$. Since Condition~(\ref{c:1}) implies a cycle of dependencies, such a single-version serial schedule $\schedule'$ view-equivalent to $\schedule$ cannot exist, thereby showing that $\schedule$ is not view-serializable.

    We argue for each type of dependency that $\trans[i]$ must precede $\trans[j]$ in every view-equivalent single-version serial schedule $\schedule'$. To this end, let $s'$ be a serial schedule view-equivalent to $s$. By Condition~(\ref{c:2}), we know that either $j = i+1$, or $i = n$ and $j = 1$. 

    \paragraph*{Case 1: $\dependson{b_i}{a_j}$ is a ww-dependency}
    %First, consider the case where $\dependson{b_i}{a_j}$ is a ww-dependency.
    Since each write operation in $\schedule$ respects the commit order in $\schedule$, we know that $\CT[i] \schords \CT[j]$. Hence, it can not be that $i = 1$ and $j = 2$. Furthermore, Condition~(\ref{c:2}) implies that a transaction $\trans[k]$ different from $\trans[i]$ and $\trans[j]$ and writing to the same object cannot exist, as such a write operation would introduce a ww-dependency.
    % contradicting Condition~(\ref{c:2}) as we argue next. 
    Indeed, assume towards a contradiction that a write operation $c_k \in \trans[k]$ exists, with $k \neq i$ and $k \neq j$.
    \begin{itemize}
        \item If $k \neq 1$ and $k < i$, then $\CT[k] \schords \CT[i] \schords \CT[j]$ and hence $c_k \schvord b_i \schvord a_j$ by Condition~(\ref{c:3}). The implied ww-dependency $\dependson{c_k}{a_j}$ then contradicts Condition~(\ref{c:2}). 
        \item Similarly, if $k = 1$ or $k > i$, then $\CT[i] \schords \CT[j] \schords \CT[k]$ and hence $b_i \schvord a_j \schvord c_k$ by Condition~(\ref{c:3}).
        Notice in particular that if $k > i$, then $k > j$ as well due to $j = i + 1$ and $k \neq j$.
        The implied ww-dependency $\dependson{b_i}{c_k}$ then contradicts Condition~(\ref{c:2}).
    \end{itemize}
    Since $\trans[i]$ and $\trans[j]$ are the only transactions writing to the object at hand, and since $b_i \schvord a_j$,
    we conclude that $\trans[j]$ must be the transaction in $\schedule$ writing the last version of this particular object.
    In the view-equivalent single-version serial schedule $\schedule'$, a transaction writing to this object (and $\trans[i]$ in particular) therefore cannot occur after $\trans[j]$.

    \paragraph*{Case 2: $\dependson{b_i}{a_j}$ is a wr-dependency}
    %Next, consider the case where $\dependson{b_i}{a_j}$ is a wr-dependency.
    By definition of $\dependson{b_i}{a_j}$, either $b_i = \schvf(a_j)$, or $b_i \schvord \schvf(a_j)$. In the former case, we have $b_i = \schvf[\schedule'](a_j)$, since $\schedule$ is view-equivalent to $\schedule'$. Hence, $\trans[i]$ must precede $\trans[j]$ in $\schedule'$ as, by definition of a schedule, $\schvf[\schedule'](a_j) \schords[\schedule'] a_j$. 

    In the latter case (\ie, $b_i \schvord \schvf(a_j)$), assume that $c_k = \schvf(a_j)$ with $c_k \in \trans[k]$ for some transaction $\trans[k] \in \transset$. If $\trans[k] = \trans[i]$, the argument is analogous to the previous case where $b_i = \schvf(a_j)$. In particular, $c_k = \schvf(a_j)$ implies $c_k = \schvf[\schedule'](a_j)$ due to $\schedule$ being view-equivalent to $\schedule'$.
    If instead $\trans[k]$ is different from $\trans[i]$, then $b_i \schvord c_k$ implies a ww-dependency $\dependson{b_i}{c_k}$. By Condition~(\ref{c:2}) of Definition~\ref{def:generalizedsplitschedule}, this dependency implies $k=j$. Because of this ww-dependency from $\trans[i]$ to $\trans[j]$, we can now apply the same argument as in Case 1 to conclude that $\trans[i]$ must precede $\trans[j]$ in $\schedule'$.

    \paragraph*{Case 3: $\dependson{b_i}{a_j}$ is a rw-antidependency}
    %Finally, consider the case where $\dependson{b_i}{a_j}$ is a rw-antidependency.
    Then, $\schvf(b_i) \schvord a_j$. We consider two cases, depending on whether another write operation on the same object as $a_j$ occurs in $\schedule$ writing a version installed after the version created by $a_j$.
    For the former case, assume no other write operation $c_k$ exists with $a_j \schvord c_k$.
    Then, the desired result is immediate, since $\trans[j]$ installs the last version of the object at hand in $\schedule$. Indeed, if $\trans[i]$ would occur after $\trans[j]$ in $\schedule'$, then $b_i$ would read the version installed by $\trans[j]$ in $\schedule'$
    which would imply that $\schvf[\schedule'](b_i) = a_j \neq \schvf[\schedule](b_i)$, thereby violating that $s'$ is view-equivalent with $s$.

    For the latter case, assume a write operation $c_k \in \trans[k]$ on the same object as $a_j$ exists, with $\trans[k] \in \transset$ different from $\trans[j]$ and $a_j \schvord c_k$. Then, there is a ww-dependency $\dependson{a_j}{c_k}$. By Condition~(\ref{c:2}), either $k = j+1$ or $j = n$ and $k = 1$ ($\dagger$). Furthermore, since $\schvord$ defines a total order for each set of write operations on the same object, we have $\schvf(b_i) \schvord a_j \schvord c_k$ and hence $\schvf(b_i) \schvord c_k$. So, unless $\trans[i] = \trans[k]$, $\dependson{b_i}{c_k}$ is a rw-antidependency as well. By the same argument as above, the antidependency $\dependson{b_i}{c_k}$ implies that either $k = i+1$ or $i = n$ and $k = 1$ ($\ddagger$). Combining ($\dagger$) and ($\ddagger$), we derive that $i = j$, contradicting the assumption that $\trans[i]$ and $\trans[j]$ are different transactions. We therefore conclude that $\trans[i] = \trans[k]$.

    Note that there is a rw-antidependency $\dependson{b_i}{a_j}$ from $\trans[i]$ to $\trans[j]$, and a ww-dependency $\dependson{a_j}{c_k}$ from $\trans[j]$ to $\trans[i]$.
    According to Condition~(\ref{c:2}), this can only occur if $n = 2$ (i.e., there are only two transactions in $\transset$).
    Since all write operations in $\schedule$ respect the commit order of $\schedule$ (Condition~(\ref{c:3})), and since $a_j \schvord c_k$, the structure of $\schedule$ (in particular, $\trans[2]$ committing before $\trans[1]$) implies that $\trans[i] = \trans[k] = \trans[1]$, and $\trans[j] = \trans[2]$.

    We now argue that $\trans[1]$ must precede $\trans[2]$ in every view-equivalent single-version serial schedule $\schedule'$. If $\schvf(b_i) = \sstart$, the argument is immediate, as $\trans[2]$ writes to the object at hand and therefore $\schvf[\schedule'](b_i) = \sstart$ can only occur when $\trans[1]$ precedes $\trans[2]$ in $\schedule'$. If instead $\schvf(b_i) = d_\ell$ for some write operation $d_\ell$ occurring in a transaction $\trans[\ell] \in \{\trans[1], \trans[2]\}$, the argument is based on whether $d_\ell \in \trans[1]$ or $d_\ell \in \trans[2]$. For both cases, we will argue that no view-equivalent single-version serial schedule $\schedule'$ can exist. If $d_\ell \in \trans[1]$, then $\schvf(b_i) = d_\ell \schvord a_j$ by definition of $\dependson{b_i}{a_j}$, contradicting our assumption that the write operations in $\schedule$ respect the commit order of $\schedule$ (recall that $\trans[1]$ commits after $\trans[2]$ in $\schedule$). If instead $d_\ell \in \trans[2]$, then $\schvf(b_i) = d_\ell \schvord a_j$ implies $d_\ell <_{\trans[2]} a_j$. But then $\schedule$ can never be view-equivalent to a single-version serial schedule $\schedule'$, since in such a schedule a read operation $b_i \in \trans[1]$ can never observe the version written by $d_\ell$ in $\trans[2]$. Indeed, if $\trans[1]$ precedes $\trans[2]$ in $\schedule'$, then $b_i$ cannot observe the version written by $d_\ell$ since $b_i \schords[\schedule'] d_\ell$, and if $\trans[2]$ precedes $\trans[1]$ in $\schedule'$, then $b_i$ cannot observe the version written by $d_\ell$, as this version is already overwritten by the more recent version written by $a_j$.
\end{proof}
\end{toappendix}

\subsection{Sufficient condition}

{We are now ready to formulate a sufficient condition for allocations for which conflict-robustness implies view-robustness:}

\condition{
    \label{cond:rob:equivalence}
    {
    An allocation $\alloc$ for a set of transactions $\transset$ satisfies Condition~\ref{cond:rob:equivalence} if a generalized split schedule $\schedule'$ over a set of transactions $\transset' \subseteq \transset$ such that $\schedule'$ is allowed under $\alloc[\transset']$ always exists when $\transset$ is not conflict-robust against $\alloc$.
    }
}

The next theorem shows that \ref{cond:rob:equivalence} is indeed a sufficient condition.
{
\begin{theorem}\label{theo:condition}
    Let $\alloc$ be an allocation for a set of transactions $\transset$ for which Condition~\ref{cond:rob:equivalence} holds. Then, the following are equivalent:
    \begin{enumerate}
        \item $\transset$ is conflict-robust against $\alloc$; and,
        \item $\transset$ is view-robust against $\alloc$.
    \end{enumerate}
\end{theorem}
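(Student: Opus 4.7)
The plan is to prove the equivalence by showing both directions. The direction (1) $\Rightarrow$ (2) is already handled by Proposition~\ref{prop:rob:basicproperty}(\ref{basic:one}), which tells us that conflict-robustness always implies view-robustness (regardless of whether Condition~\ref{cond:rob:equivalence} holds). So no additional work is required for this direction.

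For the direction (2) $\Rightarrow$ (1), I would argue by contraposition: assume $\transset$ is not conflict-robust against $\alloc$, and show that $\transset$ is then not view-robust against $\alloc$ either. Here is where Condition~\ref{cond:rob:equivalence} becomes essential. By the assumption of non-conflict-robustness, Condition~\ref{cond:rob:equivalence} guarantees the existence of a subset $\transset' \subseteq \transset$ and a generalized split schedule $\schedule'$ over $\transset'$ that is allowed under $\alloc[\transset']$. The key observation is that this witness has additional structure beyond merely being non-conflict-serializable: it is specifically a generalized split schedule.

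The next step invokes Lemma~\ref{lem:generalsplit}, which says that every generalized split schedule is not only non-conflict-serializable but also non-view-serializable. Applying this to $\schedule'$, we obtain a schedule over $\transset' \subseteq \transset$ that is allowed under $\alloc[\transset']$ and is not view-serializable. By Definition~\ref{def:robustness}, this precisely witnesses that $\transset$ is not view-robust against $\alloc$, completing the contrapositive.

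I do not expect a major obstacle here, since the heavy lifting is bundled into Condition~\ref{cond:rob:equivalence} (providing the structured counter-example) and Lemma~\ref{lem:generalsplit} (showing that this structure simultaneously rules out both forms of serializability). The proof itself is essentially a two-line composition of these two facts with Proposition~\ref{prop:rob:basicproperty}(\ref{basic:one}). The real technical content lies elsewhere in the paper, namely in verifying that the isolation levels of interest (\mvrc, \si, \ssi) actually satisfy Condition~\ref{cond:rob:equivalence}, which is the statement of Theorem~\ref{theo:rc-si-ssi-view-robust} and will require a considerably more involved argument.
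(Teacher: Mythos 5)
Your proposal is correct and follows exactly the paper's own argument: direction $(1 \Rightarrow 2)$ from Proposition~\ref{prop:rob:basicproperty}(\ref{basic:one}), and direction $(2 \Rightarrow 1)$ by contraposition, composing Condition~\ref{cond:rob:equivalence} (to obtain a generalized split schedule allowed under the restricted allocation) with Lemma~\ref{lem:generalsplit} (to conclude it is not view-serializable). Your closing remark about where the real technical work lies is also accurate.
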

}

\begin{toappendix}
    \subsection{Proof for Theorem~\ref{theo:condition}}
\begin{proof}
    $(1 \to 2)$ Follows from Proposition~\ref{prop:rob:basicproperty}(\ref{basic:one}).
%    \smallskip
%    \noindent
    $(2 \to 1)$ The proof is by contraposition.
    Assume $\transset$ is not conflict-robust against $\alloc$. Then, by Condition~\ref{cond:rob:equivalence}, there exists a generalized split schedule $\schedule'$ over a set of transactions $\transset' \subseteq \transset$ such that $\schedule'$ is allowed under $\alloc[\transset']$. Furthermore, Lemma~\ref{lem:generalsplit} implies that $\schedule'$ is not view-serializable. Hence, $\transset$ is not view-robust against $\alloc$.
\end{proof}
\end{toappendix}

\subsection{Robustness against \mvrc, \si, and \ssi}
\label{sec:robustness-against-isolevels}

In this section, we obtain that for the isolation levels of PostgreSQL and Oracle, widening from conflict- to view-serializability does not allow for more sets of transactions to become robust.

\begin{theorem}\label{theo:rc-si-ssi-view-robust}
    Let $\alloc$ be an $\isopostgres$-allocation for a set of transactions $\transset$. Then, the following are equivalent:
    \begin{enumerate}
        \item $\transset$ is conflict-robust against $\alloc$;
        \item $\transset$ is view-robust against $\alloc$;
    \end{enumerate}
\end{theorem}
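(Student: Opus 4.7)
The plan is to reduce the theorem to Theorem~\ref{theo:condition} by verifying that every $\isopostgres$-allocation $\alloc$ satisfies Condition~\ref{cond:rob:equivalence}. The direction $(1) \to (2)$ is immediate from Proposition~\ref{prop:rob:basicproperty}(\ref{basic:one}), so the real work lies in showing that whenever $\transset$ fails to be conflict-robust against $\alloc$, there is a generalized split schedule over some $\transset' \subseteq \transset$ that remains allowed under $\alloc[\transset']$.

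Starting from a schedule $\schedule$ over some $\transset' \subseteq \transset$ that is allowed under $\alloc[\transset']$ but not conflict-serializable, Theorem~\ref{theo:not-conflict-serializable} provides a cycle in $\seg{\schedule}$. I would pick a \emph{minimal} such cycle, reindex the transactions along it as $\trans[1], \trans[2], \ldots, \trans[n]$ with dependencies $\dependson[\schedule]{b_i}{a_{i+1}}$ (indices modulo $n$), and shrink $\transset'$ to just the transactions appearing on this cycle. Minimality automatically buys us Condition~(\ref{c:2}) of Definition~\ref{def:generalizedsplitschedule}, since any additional dependency between non-consecutive transactions on the cycle would yield a strictly shorter cycle.

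Next I would construct a candidate generalized split schedule of the shape
$$\schedule' \;:=\; \prefix{\trans[1]}{b_1}\cdot \trans[2]\cdot \ldots \cdot \trans[n] \cdot \postfix{\trans[1]}{b_1},$$
defining $\schvord[\schedule']$ to follow the commit order of $\schedule'$ and choosing $\schvf[\schedule']$ so that each dependency edge $\dependson[\schedule']{b_i}{a_{i+1}}$ is preserved, thereby securing Condition~(\ref{c:1}); Condition~(\ref{c:3}) then holds by construction. The remaining task is to verify that $\schedule'$ is allowed under $\alloc[\transset']$ in the sense of Definition~\ref{def:mixed:schedule}.

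The main obstacle I anticipate is precisely this final allowance check. For each $\trans[i]$ with $i \geq 2$, the transaction runs as a contiguous block in $\schedule'$, so the read-last-committed, concurrent-write and dirty-write conditions reduce essentially to the serial case and follow from the corresponding conditions in the original schedule $\schedule$. The delicate case is $\trans[1]$, whose execution straddles the serial block $\trans[2]\cdots\trans[n]$: I would case-split on the type of the closing edge $\dependson[\schedule']{b_n}{a_1}$ and on $\alloc(\trans[1])$, using the fact that the chosen cycle exists already in $\schedule$ together with the requirements forced on $\schedule$ by $\alloc[\transset']$ to propagate the corresponding requirements to $\schedule'$. For the \ssi part of the allocation I additionally need to argue that no dangerous structure composed solely of \ssi-allocated transactions arises in $\schedule'$; here I would lean on minimality of the cycle together with the refined definition of dangerous structure (specifically the commit-order clause $\CT[3] \schords \CT[1]$ and the read-only optimization) to rule out any dangerous structure that was not already present in $\schedule$. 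The net effect is that any isolation-level violation discovered in $\schedule'$ would witness a violation in the original schedule $\schedule$, contradicting our assumption that $\schedule$ was allowed under $\alloc[\transset']$, so $\schedule'$ is the desired generalized split schedule and Condition~\ref{cond:rob:equivalence} is established.
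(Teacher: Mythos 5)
Your high-level strategy coincides with the paper's: both reduce to Theorem~\ref{theo:condition} by verifying Condition~\ref{cond:rob:equivalence}, and your idea of shrinking to the transactions of a minimal cycle parallels the paper's Lemma~\ref{lemma:nonconfser:minimal}. However, there is a genuine gap at the step you yourself flag as the ``main obstacle'': the claim that the rearranged schedule $\schedule' = \prefix{\trans[1]}{b_1}\cdot \trans[2]\cdots\trans[n]\cdot\postfix{\trans[1]}{b_1}$ built from an \emph{arbitrary} minimal cycle (with an arbitrary choice of which transaction to split) is allowed under $\alloc[\transset']$. This cannot be ``propagated'' from the allowance of the original schedule $\schedule$, because the rearrangement changes the concurrency structure entirely: transactions that were not concurrent in $\schedule$ become concurrent in $\schedule'$. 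Concretely, if the cycle edge leaving $\trans[1]$ is a ww-dependency, then in $\schedule'$ the write $b_1$ in the prefix is uncommitted while $\trans[2]$ writes the same object, which is a dirty write and hence disallowed under every isolation level in $\isopostgres$; if that edge is a wr-dependency, $\trans[2]$ would have to read $\trans[1]$'s uncommitted version, contradicting read-last-committed. So the split transaction and the orientation of the cycle must be chosen so that, in particular, the edge out of $\trans[1]$ is an rw-antidependency with suitable properties (and further conditions are needed to avoid concurrent writes for \si/\ssi-allocated transactions and dangerous structures among \ssi-allocated ones). Proving that such a choice always exists when $\transset$ is not conflict-robust is precisely the hard content that your sketch omits. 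A related soundness issue: you propose to ``choose $\schvf[\schedule']$ so that each dependency edge is preserved,'' but the version function is not free --- it is forced (given $\schvord[\schedule']$) by the read-last-committed requirements of \mvrc and \si, so dependencies from $\schedule$ may simply fail to exist, or unwanted new ones may appear, in $\schedule'$.

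The paper sidesteps exactly this difficulty by invoking Theorem 3.3 of Vandevoort et al.~\cite{DBLP:conf/pods/VandevoortKN23}, which already establishes that non-(exact-)conflict-robustness against an $\isopostgres$-allocation is witnessed by a \emph{multiversion split schedule} that is allowed under the allocation; the delicate dependency-type and dangerous-structure analysis lives inside that cited result. The paper's own work then consists of (i) Lemma~\ref{lemma:nonconfser:extended}, showing conflict-robustness and exact conflict-robustness coincide for $\isopostgres$-allocations (needed because the cited theorem concerns exact robustness), and (ii) Lemma~\ref{lemma:nonconfser:minimal} plus an iteration: if the multiversion split schedule has a serial tail or a non-minimal cycle, restrict to a minimal cycle's transactions, re-apply the cited theorem to the smaller set, and repeat until the produced multiversion split schedule is also a generalized split schedule. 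To repair your proof you would either need to invoke that prior characterization as the paper does, or re-prove it, which is a substantial argument rather than a case analysis that falls out of the original schedule's allowance.
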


To prove the above theorem, it suffices to show by Theorem~\ref{theo:condition} that Condition~\ref{cond:rob:equivalence} holds for $\isopostgres$-allocations. It follows from Vandevoort et al.~\cite{DBLP:conf/pods/VandevoortKN23} that for $\isopostgres$-allocations, exact conflict-robustness is characterized in terms of the absence of schedules of a very specific form, referred to as \emph{multiversion split schedules}. In particular, they show that if a set of transactions $\transset$ is not exact conflict-robust against an $\isopostgres$-allocation $\alloc$, then there exists a multiversion split schedule $\schedule$ over $\transset$ that is allowed under $\alloc$ and that is not conflict-serializable. For a set of transactions $\transset = \{\trans[1], \trans[2], \ldots, \trans[n]\}$ with $n \geq 2$, a multiversion split schedule has the following form:
\[
    \prefix{\trans[1]}{b_1}\cdot \trans[2]\cdot \ldots \cdot \trans[m] \cdot \postfix{\trans[1]}{b_1} \cdot \trans[m+1] \cdot \ldots \cdot \trans[n],
\]
where $b_1 \in \trans[1]$ and $m \in [2, n]$.
Furthermore, for each pair of transactions $\trans[i], \trans[j] \in \transset$ with either $j = i+1$ or $i = m$ and $j = 1$, there is a $b_i \in \trans[i]$ and $a_j \in \trans[j]$ such that $\dependson{b_i}{a_j}$.
Although the structure of a multiversion split schedule is similar to the structure of a generalized split schedule, there are two important differences: (1) a multiversion split schedule allows a tail of serial transactions $\trans[m+1] \cdot \ldots \cdot \trans[n]$ to occur after $\postfix{\trans[1]}{b_1}$; and (2), the cyclic chain of dependencies is not necessarily minimal. Furthermore, Vandevoort et al.~\cite{DBLP:conf/pods/VandevoortKN23} only consider exact conflict-robustness, whereas we consider conflict-robustness. 
Our proof of Theorem~\ref{theo:rc-si-ssi-view-robust} follows the following steps:
(1) show that conflict-robustness and exact conflict-robustness coincides for $\isopostgres$-allocations; (2) show that a counterexample multiversion split schedule allowed under a $\isopostgres$-allocation $\alloc$ can always be transformed into a generalized split schedule that is allowed under $\alloc$. Theorem~\ref{theo:rc-si-ssi-view-robust} then readily follows.

From Theorem 3.3 in \cite{DBLP:conf/pods/VandevoortKN23} and Theorem~\ref{theo:rc-si-ssi-view-robust}, we have the following corollary:
\begin{corollary}
    Let $\alloc$ be an $\isopostgres$-allocation for a set of transactions $\transset$. 
    Deciding whether $\transset$ is view-robust against $\alloc$ is in polynomial time.
\end{corollary}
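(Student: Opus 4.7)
The plan is to reduce the problem of deciding view-robustness against an $\isopostgres$-allocation to the problem of deciding conflict-robustness against the same allocation, and then invoke the known polynomial-time algorithm for the latter. Concretely, given an instance $(\transset, \alloc)$ of the view-robustness problem, I would simply run the conflict-robustness decision procedure on the same input and return its answer.

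Correctness follows from Theorem~\ref{theo:rc-si-ssi-view-robust}: for any $\isopostgres$-allocation $\alloc$ and any set of transactions $\transset$, $\transset$ is view-robust against $\alloc$ if and only if it is conflict-robust against $\alloc$. Thus a polynomial-time decision procedure for the right-hand side is automatically a polynomial-time decision procedure for the left-hand side. The existence of such a procedure for conflict-robustness against $\isopostgres$-allocations is exactly Theorem~3.3 of~\cite{DBLP:conf/pods/VandevoortKN23}, which the corollary explicitly invokes, so no separate algorithm needs to be designed or analyzed here.

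There is essentially no technical obstacle, since both ingredients are already established: the equivalence between conflict- and view-robustness for the class $\isopostgres$ (Theorem~\ref{theo:rc-si-ssi-view-robust}) is the main contribution just proved in this section, and the \PTIME upper bound for conflict-robustness has been imported as a black box from prior work. The only thing to note is that the reduction is an identity on instances, so its running time is trivially polynomial (in fact constant) and composes directly with the \PTIME algorithm of~\cite{DBLP:conf/pods/VandevoortKN23}. The corollary therefore admits a one-line proof combining these two results.
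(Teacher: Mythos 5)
Your proposal is correct and matches the paper's own argument exactly: the corollary is stated as an immediate consequence of Theorem~\ref{theo:rc-si-ssi-view-robust} (equivalence of view- and conflict-robustness for $\isopostgres$-allocations) combined with the polynomial-time decidability of conflict-robustness from Theorem~3.3 of~\cite{DBLP:conf/pods/VandevoortKN23}. The only point worth keeping in mind is that the cited theorem concerns \emph{exact} conflict-robustness, but since the paper shows (via Lemma~\ref{lemma:nonconfser:extended}) that exact conflict-robustness and conflict-robustness coincide for these allocations, your identity reduction goes through as stated.
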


\begin{toappendix}
    \subsection{Proof for Theorem~\ref{theo:rc-si-ssi-view-robust}}

As mentioned in Section~\ref{sec:robustness:definitions}, conflict-robustness and exact conflict-robustness are equivalent since a non-conflict-serializable schedule $\schedule'$ over a subset of transactions $\transset' \subseteq \transset$ can always be extended to a non-conflict-serializable $\schedule$ over $\transset$, under the assumption that appending transactions in a serial fashion does not violate the corresponding isolation level requirements. The next Lemma provides a more formal argument for $\isopostgres$-allocations.

\begin{lemma}
    \label{lemma:nonconfser:extended}
    Let $\alloc$ be an $\isopostgres$-allocation for a set of transactions $\transset$, and let $\schedule'$ be a schedule over a subset $\transset' \subseteq \transset$ that is allowed under $\alloc[\transset']$ and is not conflict-serializable. Then, there exists a schedule $\schedule$ over $\transset$ that is allowed under $\alloc$ and is not conflict-serializable.
\end{lemma}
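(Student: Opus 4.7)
The plan is to construct $\schedule$ by taking $\schedule'$ and appending the transactions in $\transset \setminus \transset'$ one-by-one in a completely serial fashion at the end of $\schedule'$, with each appended transaction fully executing and committing before the next one starts. The version order $\schvord$ is extended so that each write operation of an appended transaction comes after all previously installed versions of the same object, and each read operation in an appended transaction is assigned the most recently installed version (or $\sstart$ if no such version exists). We then need to verify two things: (1) $\schedule$ is allowed under $\alloc$, and (2) $\schedule$ is not conflict-serializable.

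For point (1), the transactions in $\transset'$ remain exactly as they were in $\schedule'$, so their isolation level requirements are inherited from $\schedule'$ being allowed under $\alloc[\transset']$. For each appended transaction $\trans[\ell]$, all earlier transactions have committed before $\tfirst(\trans[\ell])$. Hence its writes trivially respect the commit order (it commits last among earlier-committed transactions), its reads are read-last-committed relative to $\tfirst(\trans[\ell])$ and thus also relative to themselves, and $\trans[\ell]$ exhibits no concurrent writes (and hence no dirty writes) because it is concurrent with no other transaction. So $\trans[\ell]$ satisfies both the \rc and \si requirements regardless of $\alloc(\trans[\ell])$. For the \ssi condition, a dangerous structure $\trans[i] \to \trans[j] \to \trans[k]$ requires $\trans[i]$ concurrent with $\trans[j]$ and $\trans[j]$ concurrent with $\trans[k]$; since appended transactions are concurrent with no one, no new dangerous structure involving an appended transaction can arise, and any dangerous structure entirely within $\transset'$ is ruled out because $\schedule'$ was already allowed under $\alloc[\transset']$.

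For point (2), the key claim is that the serialization graph $\seg{\schedule'}$ is a subgraph of $\seg{\schedule}$. This will follow by showing that every dependency in $\schedule'$ between two transactions of $\transset'$ is also a dependency in $\schedule$, which is immediate by our choice of extending $\schvord$ (the relative order of writes within $\transset'$ is preserved) and our choice of $\schvf$ (the reads of transactions in $\transset'$ see the same versions as in $\schedule'$, since those versions are already installed before any new writes are appended). Consequently, the cycle in $\seg{\schedule'}$ guaranteed by Theorem~\ref{theo:not-conflict-serializable} is still a cycle in $\seg{\schedule}$, so $\schedule$ is not conflict-serializable.

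The mildly delicate part of the argument is (1), specifically making sure no dangerous structure is created when some of the appended transactions are allocated to \ssi; but since appended transactions have no concurrent counterparts, the concurrency requirements of a dangerous structure cannot be met across the boundary between $\transset'$ and $\transset \setminus \transset'$, so this reduces to a routine case analysis. The verification of read-last-committed and commit-order-respecting writes for appended transactions is then immediate from the serial nature of the suffix.
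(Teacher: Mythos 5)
Your proposal is correct and follows essentially the same approach as the paper's proof: append the transactions of $\transset \setminus \transset'$ serially after $\schedule'$, extend the version order and version function so that appended writes install the newest versions and appended reads observe the last written version, then observe that serial transactions trivially satisfy the \rc/\si conditions and cannot participate in dangerous structures, while all dependencies among $\transset'$ transactions (and hence the cycle in $\seg{\schedule'}$) are preserved. Your explicit remark that appended transactions satisfy both the \rc and \si requirements regardless of their allocation is in fact slightly more careful than the paper's phrasing on that point.
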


\begin{proof}
    The schedule $\schedule$ is constructed from $\schedule'$ by extending the schedule with the remaining transactions in $\transset$ (i.e., those not appearing in $\transset'$) in a serial fashion.
    More formally, we extend $\schords[\schedule']$ towards $\schords$ by taking $b_i \schords a_j$ for each pair of operations $b_i \in \trans[i]$ and $a_j \in \trans[j]$ with $\trans[i] \in \transset'$ and $\trans[j] \not\in \transset'$. Furthermore, for each pair of operations $a_i, b_i \in \trans[i]$ with $\trans[i]$ a transaction not appearing in $\transset'$, there is no operation $c_j \in \trans[j]$ with $\trans[i] \neq \trans[j]$ such that $a_i \schords c_j \schords b_i$.
    The version order $\schvord[\schedule']$ is extended towards $\schvord$ by taking $b_i \schvord a_j$ for each pair of write operations $b_i \in \trans[i]$ and $a_j \in \trans[j]$ with $\trans[i] \in \transset'$ and $\trans[j] \not\in \transset'$. For each pair of write operations $b_i$ and $a_j$  with $\trans[i], \trans[j] \not\in \transset'$, we take $b_i \schvord a_j$ iff $b_i \schords a_j$.
    Finally, for each read operation $b_i$ not occurring in $\transset'$ the version function $\schvf(b_i)$ is defined such that $b_i$ observes the last written version. That is, there is no write operation $a_j$ on the same object as $b_i$ with $\schvf(b_i) \schords a_j \schords b_i$.
    
    We argue that $\schedule$ is indeed allowed under $\alloc$.
    By construction of $\schedule$ based on extending $\schedule'$, a transaction in $\transset'$ allowed under \rc (respectively \si) in $\schedule'$ is allowed under \rc (respectively \si) in $\schedule$ as well. Since transactions not in $\transset'$ are appended serially in $\schedule$, they are allowed under \si in $\schedule$. Furthermore, since the newly added transactions are never concurrent with other transactions, they cannot introduce additional dangerous structures in $\schedule$. Hence, since $\schedule'$ is allowed under $\alloc[\transset']$, we conclude that $\schedule$ is allowed under $\alloc$.

    It remains to argue that $\schedule$ is not conflict-serializable. To this end, note that by construction of $\schedule$ the order of operations $\schords$, version order $\schvord$ and version function $\schvf$ are identical to respectively $\schords[\schedule']$, $\schvord[\schedule']$ and $\schvf[\schedule']$ when restricting our attention to operations in $\transset'$. Because of this, each dependency $\dependson[\schedule']{b_i}{a_j}$ in $\schedule'$ implies an equivalent dependency $\dependson[\schedule]{b_i}{a_j}$ in $\schedule$. A cycle in $\seg{\schedule'}$ therefore implies a cycle in $\seg{\schedule}$ as well. Since $\schedule'$ is not conflict-serializable, $\schedule$ is not conflict-serializable either.
\end{proof}

Towards the proof of Theorem~\ref{theo:rc-si-ssi-view-robust}, 
the next Lemma states that for a non-conflict-serializable schedule $\schedule$ witnessed by a cycle $C$ in the $\seg{\schedule}$, transactions not in $C$ can be removed while maintaining a non-conflict-serializable schedule.  

\begin{lemma}
    \label{lemma:nonconfser:minimal}
    Let $\schedule$ be a non-conflict-serializable schedule over a set of transactions $\transset$ allowed under an $\isopostgres$-allocation $\alloc$, and let $C$ be a cycle in $\seg{\schedule}$ with $\transset' \subseteq \transset$ the set of transactions in $C$. Then, there exists a non-conflict-serializable schedule $\schedule'$ over $\transset'$ allowed under $\alloc[\transset']$.
\end{lemma}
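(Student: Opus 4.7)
The plan is to build $\schedule'$ as the projection of $\schedule$ onto the transactions in $\transset'$: keep the operations of $\transset'$ together with $\sstart$, restrict $\schords$ and $\schvord$ accordingly, and carefully redefine the version function for every read $b_i \in \transset'$ whose original source $\schvf[\schedule](b_i)$ lies in a transaction of $\transset \setminus \transset'$. The key construction is to take $\schvf[\schedule'](b_i)$ to be the $\schvord$-maximum element of $\{\sstart\} \cup \{w \in \transset' : w \text{ is a write on the same object as } b_i \text{ with } w \leq_{\schvord} \schvf[\schedule](b_i)\}$. This set is always nonempty (since $\sstart$ qualifies) and has a unique $\schvord$-maximum, so $\schvf[\schedule']$ is well-defined.

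Having fixed the construction, my first check would be that every dependency of the cycle $C$ is preserved in $\seg{\schedule'}$. ww-dependencies follow from the restriction of $\schvord$. For a wr-dependency $\dependson[\schedule]{b_i}{a_j}$ with $b_i, a_j \in \transset'$, observe that $b_i$ qualifies as a candidate in the $\schvord$-max defining $\schvf[\schedule'](a_j)$, so $b_i \leq_{\schvord} \schvf[\schedule'](a_j)$, which yields the wr-dependency in $\schedule'$. For an rw-antidependency $\dependson[\schedule]{b_i}{a_j}$, the chain $\schvf[\schedule'](b_i) \leq_{\schvord} \schvf[\schedule](b_i) \schvord a_j$ immediately gives $\schvf[\schedule'](b_i) \schvord a_j$. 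Since $C$ involves only transactions in $\transset'$, the cycle survives, so $\schedule'$ fails to be conflict-serializable by Theorem~\ref{theo:not-conflict-serializable}.

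Next, I would verify that $\schedule'$ is allowed under $\alloc[\transset']$. Validity of $\schedule'$ reduces mainly to showing $\schvf[\schedule'](b_i) \schords[\schedule'] b_i$, which follows because in $\isopostgres$-allowed schedules $\schvord$ on different-transaction writes coincides with commit order, and any write $\schvord$-below $\schvf[\schedule](b_i)$ must commit before $b_i$ (under \rc) or before the first operation of the transaction containing $b_i$ (under \si). Read-last-committed for \rc- and \si-transactions in $\transset'$ then follows from maximality: any purported later-committed write in $\transset'$ strictly between $\schvf[\schedule'](b_i)$ and $b_i$ would either be a larger candidate in the defining set (contradicting maximality of $\schvf[\schedule'](b_i)$) or contradict the read-last-committed property of $b_i$ already holding in $\schedule$. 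Dirty and concurrent writes cannot emerge when transactions are removed. Finally, neither new wr- nor new rw-dependencies can arise in $\schedule'$ (the former by maximality of $\schvf[\schedule']$, the latter because $\schvf[\schedule']$ is $\schvord$-bounded above by $\schvf[\schedule]$), so no new dangerous structure among \ssi-transactions in $\transset'$ can appear.

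The hard part, I expect, will be the detailed verification of read-last-committed for $\schedule'$, since it requires reconciling the shifted notion of ``most recently committed write'' after truncation with the $\schvord$-maximality construction, with some case analysis around writes occurring in the same transaction as the read (where $\schvord$ is determined by intra-transaction order rather than commit order). Modulo this bookkeeping, the argument is essentially structural: every dependency and every isolation-level violation in $\schedule'$ corresponds to one already present in $\schedule$, while the cycle $C$ transfers over intact.
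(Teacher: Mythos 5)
Your construction is essentially identical to the paper's proof: project $\schedule$ onto $\transset'$, redefine each orphaned read's version to the $\schvord$-maximal write in $\transset'$ on the same object lying $\schvord$-below its original version, preserve the cycle by noting ww-dependencies restrict, wr-dependencies follow because the source write is a candidate in the defining set, and rw-antidependencies follow from $\schvf[\schedule'](b_i) \leq_{\schvord} \schvf[\schedule](b_i)$, and then verify read-last-committed by the same maximality-versus-read-last-committed-in-$\schedule$ contradiction the paper uses. One cosmetic slip worth fixing: your parenthetical justifications are swapped --- the absence of new wr-dependencies follows from $\schvf[\schedule']$ being $\schvord$-bounded above by $\schvf[\schedule]$, whereas the absence of new rw-antidependencies is the claim that needs maximality (an intermediate write in $\transset'$ strictly $\schvord$-between $\schvf[\schedule'](b_i)$ and $\schvf[\schedule](b_i)$ would otherwise create one) --- but both facts do hold for your construction, so the argument goes through as the paper's does.
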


\begin{proof}
    Let $\schedule'$ be the schedule obtained from $\schedule$ by removing all operations from transactions not in $\transset'$. For the order of operations $\schords[\schedule']$ and version order $\schvord[\schedule']$ this construction is trivial, but the version function $\schvf[\schedule']$ requires more attention. Indeed, for a read operation $b_i$ in $\schedule'$, the version $\schvf(b_i)$ observed in $\schedule$ is not necessarily an operation in $\transset'$. In this case, we define $\schvf[\schedule'](b_i)$ such that it observes the last version written in $s'$. 
    More formally, for each read operation $b_i \in \trans[i]$ in $\schedule'$, we require that $\schvf[\schedule'](b_i) = \schvf(b_i)$ if $\schvf(b_i)$ is an operation in $\schedule'$.
    Otherwise, $\schvf[\schedule'](b_i)$ is the last operation occurring in $\transset'$ before $\schvf(b_i)$ according to $\schvord$. That is, $\schvf[\schedule'](b_i) \schvord \schvf[\schedule](b_i)$ and there is no other write operation $a_j$ occurring in $\transset'$ on the same object with $\schvf[\schedule'](b_i) \schvord a_j \schvord \schvf[\schedule](b_i)$.

    By construction, $\schedule'$ cannot introduce additional dirty writes, concurrent writes or dangerous structures. Furthermore, since all write operations in $\schedule$ respect the commit order of $\schedule$, the same holds for $\schedule'$. To argue that $\schedule'$ is allowed under $\alloc[\transset']$, it therefore suffices to show that if a read operation $b_j \in \trans[j]$ occurring in $\transset'$ is read-last-committed in $\schedule$ relative to an operation $a_j \in \trans[j]$, then $b_j$ is read-last-committed in $\schedule'$ relative to $a_j$ as well.
    Since either $\schvf[\schedule'](b_j) = \schvf[\schedule](b_j)$ or $\schvf[\schedule'](b_j) \schvord \schvf[\schedule](b_j)$ and since all write operations in $\schedule$ respect the commit order of $\schedule'$, we conclude that either $\schvf[\schedule'](b_j) = \sstart$ or $\CT[i] \schords[\schedule'] a_j$ with $\schvf[\schedule'](b_j) \in \trans[i]$.
    Towards a contradiction, assume a write operation $c_k \in \trans[k]$ occurring in $\transset'$ over the same object as $b_j$ exists with $\CT[k] \schords[\schedule'] a_j$ and $\schvf[\schedule'](b_j) \schvord[\schedule'] c_k$. By construction of $\schedule'$ based on $\schedule$, we have $\CT[k] \schords[\schedule] a_j$ and $\schvf[\schedule'](b_j) \schvord c_k$.
    Since $b_j$ is read-last-committed in $\schedule$ relative to $a_j$, either $\schvf[\schedule](b_j) = c_k$ or $c_k \schvord \schvf[\schedule](b_j)$. Due to $\schvf[\schedule'](b_j) \schvord[\schedule'] c_k$, both options contradict our choice of $\schvf[\schedule'](b_j)$ in the construction of $\schedule'$.

    It remains to show that $\schedule'$ is not conflict-serializable. To this end, let $b_i \in \trans[i]$ and $a_j \in \trans[j]$ be two operations with $\trans[i], \trans[j] \in \transset'$ such that $\dependson[\schedule]{b_i}{a_j}$. We argue that $\dependson[\schedule']{b_i}{a_j}$ is an (anti)dependency in $\schedule'$ as well. If $\dependson[\schedule]{b_i}{a_j}$ is a ww-dependency, the result is immediate, since $b_i \schvord a_j$ implies $b_i \schvord[\schedule'] a_j$.
    If $\dependson[\schedule]{b_i}{a_j}$ is a rw-antidependency, then $\schvf(b_i) \schvord a_j$. By construction of $\schedule'$, either $\schvf[\schedule'](b_i) = \schvf(b_i)$ or $\schvf[\schedule'](b_i) \schvord \schvf(b_i)$. Consequently, $\schvf[\schedule'](b_i) \schvord a_j$ and hence $\schvf[\schedule'](b_i) \schvord[\schedule'] a_j$, thereby witnessing $\dependson[\schedule']{b_i}{a_j}$.
    Finally, if $\dependson[\schedule]{b_i}{a_j}$ is a wr-dependency, then $b_i = \schvf(a_j)$ or $b_i \schvord \schvf(a_j)$. In the former case, $b_i = \schvf[\schedule'](a_j)$, since $\schvf[\schedule'](a_j) = \schvf(a_j)$ by construction of $\schedule'$. In the latter case, it can be argued that either $b_i = \schvf[\schedule'](a_j)$ or $b_i \schvord[\schedule'] \schvf[\schedule'](a_j)$ since all write operations in $\schedule$ and $\schedule'$ respect the commit order in respectively $\schedule$ and $\schedule'$, and since $a_j$ is read-last-committed relative to some operation $c_j \in \trans[j]$ in both $\schedule$ and $\schedule'$. We conclude that in both cases $\dependson[\schedule']{b_i}{a_j}$ is implied.
    
    Because each (anti)dependency in $\schedule$ between operations occurring in $\transset'$ is preserved in $\schedule'$, it is now easy to see that the cycle $C$ in $\seg{\schedule}$ implies a cycle $\seg{\schedule'}$ as well, thereby witnessing that $\schedule'$ is not conflict-serializable.
\end{proof}

For an initial set of transactions $\transset$, the proof of Theorem~\ref{theo:rc-si-ssi-view-robust} now relies on a repeated application of Lemma~\ref{lemma:nonconfser:minimal} until we obtain a non-conflict-robust set of transactions $\transset' \subseteq \transset$ for which the implied multiversion split schedule $\schedule'$ is a generalized split schedule. Indeed, if $\schedule'$ is not a generalized split schedule, then we can always remove at least one transaction from $\transset'$ by picking a minimal cycle $C$ in $\seg{\schedule'}$ and applying Lemma~\ref{lemma:nonconfser:minimal} to obtain a non-conflict-serializable schedule $\schedule''$ over a set of transactions $\transset'' \subsetneq \transset'$. Since $\schedule''$ is allowed under $\alloc[\transset'']$, this schedule witnesses the fact that $\transset''$ is not conflict-robust against $\alloc[\transset'']$. Since each iteration removes at least one transaction, the algorithm terminates after a finite number of steps.

\begin{proof}[Proof of Theorem~\ref{theo:rc-si-ssi-view-robust}]
    To prove Theorem~\ref{theo:rc-si-ssi-view-robust} it suffices to show by Theorem~\ref{theo:condition} that Condition~\ref{cond:rob:equivalence} holds for $\isopostgres$-allocations.
    To this end, let $\transset$ be a set of transactions not conflict-robust against an $\isopostgres$-allocation $\alloc$ for $\transset$.
    This means that there exists a schedule $\schedule'$ over a subset $\transset' \subseteq \transset$ that is allowed under $\alloc[\transset']$ and not conflict-serializable. By Lemma~\ref{lemma:nonconfser:extended}, we can extend $\schedule'$ to a schedule over $\transset$ that is allowed under $\alloc$ and not conflict-serializable, thereby witnessing $\transset$ being not exact conflict-robust against $\alloc$.
    By Theorem 3.3 in \cite{DBLP:conf/pods/VandevoortKN23}, a multiversion split schedule $\schedule$ over $\transset$ exists that is allowed under $\alloc$ and not conflict-serializable.
    Recall that the multiversion split schedule $\schedule$ has the following form:
    \[
        \prefix{\trans[1]}{b_1}\cdot \trans[2]\cdot \ldots \cdot \trans[m] \cdot \postfix{\trans[1]}{b_1} \cdot \trans[m+1] \cdot \ldots \cdot \trans[n],
    \]
    where $b_1 \in \trans[1]$ and $m \in [2, n]$.
    If $\schedule$ is a generalized split schedule as well, the result is immediate. Otherwise, $\schedule$ either has a nonempty tail of transactions (i.e., $m < n$), or the cycle $C$ in $\seg{\schedule}$ corresponding to the cyclic chain of dependencies between each $\trans[i]$ and $\trans[i+1]$ with $i \in [1, m-1]$, as well as between $\trans[m]$ and $\trans[1]$ is not minimal. In both cases, a (minimal) cycle $C'$ in $\seg{\schedule}$ exists that does not involve all transactions in $\transset$. By Lemma~\ref{lemma:nonconfser:minimal}, we can remove transactions not in $C'$ while maintaining a non-conflict-serializable schedule $\schedule'$ over a set of transactions $\transset' \subsetneq \transset$. Since $\schedule'$ is allowed under $\alloc[\transset']$, this schedule witnesses the fact that $\transset'$ is not conflict-robust against $\alloc[\transset']$.
    Since $\transset'$ is not conflict-robust against $\alloc[\transset']$, a multiversion split schedule $\schedule''$ over $\transset'$ exists that is allowed under $\alloc[\transset']$ and not conflict-serializable.
    We can now repeat the argument above for $\schedule''$ instead of $\schedule$, until the obtained multiversion split schedule is a generalized split schedule as well.
    Because each iteration removes at least one transaction from the remaining set of transactions, the algorithm terminates after a finite number of steps.
\end{proof}
\end{toappendix}

% !TeX root = main.tex

\section{Deciding View-Serializability}
\label{sec:deciding:viewser}

It is well-known that view-serializability is NP-hard~\cite{DBLP:books/cs/Papadimitriou86}. The proof is based on a reduction from the polygraph acyclicity problem but the resulting schedules make extensive use of dirty writes which are not permitted under \mvrc or \si. We present a modified reduction that avoids dirty writes and obtain that 
deciding view-serializability remains NP-hard even when the input is restricted to schedules only consisting of transactions that are allowed under RC, or SI, respectively.

\begin{theorem}\label{thm:vs:rc:si:np:hard}
Deciding view-serializability is NP-hard, even for schedules only consisting of transactions allowed under RC, or SI, respectively.
\end{theorem}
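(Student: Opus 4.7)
The plan is to adapt the classical NP-hardness reduction of Papadimitriou from polygraph acyclicity to view-serializability so that the constructed schedule is allowed under \rc and, with an extra refinement, also under \si. Recall that a polygraph $P = (N, A, B)$ consists of a set of nodes, a set of arcs $A$ and a set of bipaths $B$, and that $P$ has an acyclic completion iff one can choose, for every bipath, one of its two arcs so that the resulting directed graph is acyclic. The template reduction maps each node $v \in N$ to a transaction $T_v$, each arc $(u,v) \in A$ to a wr-dependency realised by $T_v$ reading an object written by $T_u$, and each bipath $(u, r, w)$ to a pattern in which $r$ observes the version of a specific object produced by $u$ while $w$ is another writer of that same object; dummy transactions $T_0$ and $T_\infty$ supply the initial and final reads/writes in the usual way. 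Papadimitriou's construction produces schedules that freely interleave writes and in which a second writer on the same object may appear before the first writer has committed, i.e.\ exactly the dirty writes (and concurrent writes) forbidden by \rc and \si.

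The first step of the plan is to re-engineer the layout of the schedule, object by object, so that \rc's constraints are satisfied without altering the view-polygraph. For each object $x$ I would order the operations on $x$ as follows: all writers of $x$, say $T_{i_1},\ldots,T_{i_k}$, are placed in some order with their respective commits immediately following the write on $x$; between consecutive writers $T_{i_\ell}$ and $T_{i_{\ell+1}}$, I insert precisely those read operations $R_j[x]$ whose version function, as dictated by $P$, is $W_{i_\ell}[x]$. Under this layout every read is read-last-committed relative to itself, no transaction writes to $x$ while another transaction's write on $x$ is uncommitted, so no dirty writes appear, and the schedule is allowed under \rc. Crucially, I would verify that the induced view-polygraph is still isomorphic to $P$: the wr-arcs $W_{i_\ell} \to R_j$ are preserved by construction, and for every such read and every other writer $W_{i_m}$ of $x$ the bipath $(W_{i_m}, W_{i_\ell}) \vee (R_j, W_{i_m})$ is exactly the one generated in the original reduction.

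For the \si case I would refine the construction further so that any two distinct transactions writing the same object are non-concurrent. The simplest way to achieve this is to introduce for each object $x$ a "block" in the schedule containing all operations that touch $x$, arranged serially with respect to writers (each writer starts, writes, and commits before the next writer begins any operation on $x$, while still interleaving freely with operations on other objects). To make the reads compatible with \si's requirement of being read-last-committed relative to the first operation of their transaction, I would place the designated read $R_j[x]$ at the very beginning of $T_j$, or, if $T_j$ already has earlier operations dictated by other parts of the polygraph, split $T_j$ into several smaller transactions connected by auxiliary objects so that, within each piece, the \si snapshot corresponds to the intended version. Since bipaths in Papadimitriou's reduction can be realised on pairwise-disjoint objects (each bipath uses only three transactions and one dedicated object), serialising the writers of each individual object does not collapse distinct bipath alternatives elsewhere.

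The main obstacle I foresee is precisely this \si refinement: the no-concurrent-writes rule is global and, if handled naively, can force an orientation on a bipath that should remain a free choice. I plan to address it through the per-bipath-object encoding sketched above, together with the transaction-splitting trick to keep each \si snapshot consistent with the prescribed version function. Once the layout is verified to preserve the view-polygraph and to meet the conditions of Definition of \rc (respectively \si), the equivalence "$s_P$ is view-serializable iff $P$ has an acyclic completion" transfers directly from the classical proof, establishing NP-hardness of view-serializability restricted to schedules allowed under \rc and, separately, under \si.
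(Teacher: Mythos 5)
There is a genuine gap, and it is concentrated in the \si half of your plan. Because you keep Papadimitriou's device of realizing every arc $(u,v)\in A$ as ``$T_v$ reads from $T_u$'', under \si the reader's \emph{entire} transaction must start after the writer's commit (every read observes the snapshot taken at the transaction's first operation), so these constraints compose along $A$-paths. Concretely, take the polygraph with arcs $A=\{(w,u),(w,v),(v,u)\}$ and the single choice $(u,v,w)$, i.e., $T_u$ reads the dedicated choice object from $T_w$ and $T_v$ is its other writer. Its arc set is acyclic, and it is a ``no''-instance (either admissible orientation, $(u,v)$ or $(v,w)$, closes a cycle), so your reduction must produce a non-view-serializable \si-allowed schedule for it. But your encoding demands $C_w <_s \mathit{first}(T_v) <_s C_v$ (arc $(w,v)$) and $C_v <_s \mathit{first}(T_u)$ (arc $(v,u)$), so $T_v$'s commit necessarily lies between $C_w$ and $\mathit{first}(T_u)$. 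Since both \rc and \si force writes to respect the commit order, $T_v$'s version of the choice object succeeds $T_w$'s in the version order, and read-last-committed relative to $\mathit{first}(T_u)$ then compels $T_u$ to observe $T_v$'s version rather than $T_w$'s. The version function your reduction prescribes is therefore not realizable by \emph{any} \si-allowed schedule, so the reduction is undefined precisely on (some of) the hard instances. The transaction-splitting escape you sketch does not repair this as stated: the pieces of a split node are independent transactions as far as view-equivalence is concerned, nothing prevents a view-equivalent single-version serial schedule from scattering them, and designing and proving a gadget that forces them to act as one node in the acyclicity correspondence is exactly the hard part you leave open. (Your \rc half also needs work --- the per-object layouts must be glued into one consistent global order, and since read-from between node transactions forces the commit order to extend $A$, inputs with cyclic $A$ need separate treatment --- but those gaps look fixable.)

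The paper's proof avoids the problem at its root: node transactions never read from one another. An arc $(w,u)$ is encoded by letting $T_w$ read the \emph{initial} version of an object written only by $T_u$ (forcing $T_w$ before $T_u$ in any view-equivalent serial schedule); a choice $(u,v,w)$ is encoded by letting both $T_u$ and $T_w$ read the version of a dedicated object committed by an auxiliary prelude transaction, with $T_v$ as its only other --- concurrent and still uncommitted --- writer. All node transactions run pairwise concurrently in a middle block in which every read precedes every commit, so no object has two concurrent writers and every read sees only an initial or prelude version. One and the same schedule is then simultaneously allowed under \rc and \si, and the equivalence with polygraph acyclicity is proved once for both cases. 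If you want to salvage a layout-based argument for \si, you will essentially need this decoupling of reads from node-transaction writes.
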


\begin{toappendix}
    \subsection{Proof for Theorem~\ref{thm:vs:rc:si:np:hard}}

\begin{proof}
We first recall the essential concepts relating to polygraphs:
A polygraph is a triple $P=(V, A, C)$ consisting of a finite set $V$ of nodes, a set $A\subseteq V^2$ of arcs, and a set $C\subseteq V^3$ of choices.
For choices $(u,v,w) \in C$ it is assumed that all nodes are distinct and that there is an arc $(w,u)\in A$.
A polygraph $P=(V, A, C)$ is called \emph{compatible} with a directed graph $(V, B)$ if $A\subseteq B$ and for each choice $(u,v,w) \in C$ either $(u,v) \in B$ or $(v,w) \in B$. Finally, a polygraph $P$ is said to be \emph{acyclic} if a directed acyclic graph exists that is compatible with $P$.

\paragraph{Reduction.}

Given a polygraph $P=(V,A,C)$, we will now construct a schedule $\schedule$ over a specific set $\transset_P$ of transactions with property that $\schedule$ is view-serializable if and only if $P$ is acyclic. For the construction, we assume existence of an object $\x_{(w,u)}$ for every arc $(w,u)\in A$ as well as an object
$\x_{(u,v,w)}$ for every choice $(u,v,w) \in C$. All these objects are assumed to be unique. The set $\transset_P$ consists of a transaction
$\trans[w]$ for every node $w\in V$
and two transactions, $\trans[(u,v,w), 0]$ and $\trans[(u,v,w),\infty]$, for every choice $(u,v,w) \in C$. Next, we will specify the operations in these transactions. As the order of operations in the transactions themselves is unimportant for the construction, we will only specify which operations are present (besides the mandatory commit operation that always occurs at the end of the transaction).

Formally, transactions $\trans[w]$ have the following operations:
\begin{itemize}
\item a read operation on $\x_{(w,u)}$ for every $(w,u) \in A$;
\item a write operation on $\x_{(u,w)}$ for every $(u,w) \in A$;
\item a read operation on $\x_{(u,v,w)}$ for every $(u,v,w) \in C$;
\item a write operation on $\x_{(u,w,v)}$ for every $(u,w,v) \in C$; and
\item a read operation on $\x_{(w,v,u)}$ for every $(w,v,u) \in C$.
\end{itemize}

Transactions $\trans[(u,v,w), 0]$ and $\trans[(u,v,w), \infty]$ both consist only of a write operation on $\x_{(u,v,w)}$.

We will now construct the target schedule $\schedule = (\schop, \schord, \schvord, \schvf)$ over $\transset_P$. This schedule can be seen as a concatenation of three different schedules: a single-version schedule over $\{\trans[(u,v,w), 0] \mid (u,v,w) \in C\}$; a schedule over $\{\trans[w]\mid w\in V\}$; and a single-version schedule over $\{\trans[(u,v,w), \infty] \mid (u,v,w) \in C\}$. The order of  transactions in the first and last part is unimportant.
In the middle part, all transactions are concurrent. For a formal definition of $\schord$ for the middle part, we can choose an arbitrary order $\le_V$ on the transactions $\{\trans[w]\mid w\in V\}$ and let $\schord$ consist of all the first operations of transactions in $\{\trans[w]\mid w\in V\}$ in the order decided by $\le_V$, then in an arbitrary order all normal operations of these transactions (except the commit), and finally the commit operations of these transactions, again in the order decided by $\le_V$.
Schematically, we get the pattern below in $\schedule$ for each arc $(w,u) \in A$:
\[
\begin{array}{rc}
\trans[w]: & \ldots \R[]{\x_{(w,u)}} \ldots \CT[] \\    
\trans[u]: & \ldots \W[]{\x_{(w,u)}} \ldots \CT[] \\
\end{array}
\]
and the below pattern for each choice $(u,v,w) \in C$:
\[
\begin{array}{rccc}
\trans[(u,v,w), 0]: & \W[]{\x_{(u,v,w)}} \, \CT[] \\
\trans[u]: & & \ldots \R[]{\x_{(u,v,w)}} \ldots \CT[] \\
\trans[v]: & & \ldots \W[]{\x_{(u,v,w)}} \ldots \CT[] \\
\trans[w]: & & \ldots \R[]{\x_{(u,v,w)}} \ldots \CT[] \\
\trans[(u,v,w), \infty]: & & & \W[]{\x_{(u,v,w)}} \, \CT[] \\
\end{array}
\]

We remark that
\begin{align}
    \parbox[t]{.8\linewidth}{$\schedule$ has no concurrent writes.}\label{cond:concurrentwrites}
\end{align}
Indeed, only the middle part has concurrent transactions, and no two of these transactions write to a same object.

We choose the version order $\schvord$ consistent with $\schord$. For this, we notice that only object $\x_{(u,v,w)}$ is written to by multiple operations, namely a write operation $o_1$ by $\trans[(u,v,w), 0]$, a write operation $o_2$ by $\trans[v]$, and a write operation $o_3$ by $\trans[(u,v,w), \infty]$.
Since $\trans[(u,v,w), 0]$, $\trans[v]$, and $\trans[(u,v,w), \infty]$ occur in respectively the first, middle and last part of $\schedule$, it follows directly from the construction of $\schedule$ that
\begin{align}
    \parbox[t]{.8\linewidth}{all operations in $\schedule$ respect the commit order of $\schedule$.}\label{cond:commitorder}
\end{align}

Finally, we choose the version function $\schvf$ to map all read operations on the last commit version before itself, hence we trivially obtain that
\begin{align}
    \parbox[t]{.8\linewidth}{each read operation in $\schedule$ is read-last-committed in $\schedule$ relative to itself.}\label{cond:readlastcommitteda}
\end{align}
Since the read operations are all occurring in the middle part of $\schedule$, the above statement is equivalent to
\begin{align}
    \parbox[t]{.8\linewidth}{each read operation in $\schedule$ is read-last-committed in $\schedule$ relative to the first operation of the transaction it belongs to.}\label{cond:readlastcommittedb}
\end{align}
Indeed, all read operations occur in the middle part of $\schedule$ and are over operations $\x_{(w,u)}$ and $\x_{(u,v,w)}$. For read operations on objects $\x_{(w,u)}$ we have $\schvf(\x_{(w,u)}) = \sstart$ and for read operations on objects $\x_{(u,v,w)}$ we have that $\schvf(\x_{(u,v,w)})$ is the unique write operation on $\x_{(u,v,w)}$ in transaction $\trans[(u,v,w), 0]$, which is part of the first part of $\schedule$.

We have now constructed a schedule $\schedule$ over $\transset_P$ from polygraph $P$ whose transactions are clearly allowed under both RC and SI, as follows directly from conditions \ref{cond:concurrentwrites}, \ref{cond:commitorder}, \ref{cond:readlastcommitteda}, and \ref{cond:readlastcommittedb}. We remark that our construction is also clearly a polynomial time construction.

Next, we show correctness of $\schedule$. That is, we show that $P$ is acyclic if and only if $\schedule$ is view-serializable.

\paragraph{If $P$ is acyclic then $\schedule$ is view-serializable.}
Since $P=(V,A,C)$ is acyclic, there exists an acyclic directed graph $G=(V,B)$ that is compatible with $P$.
Let $\le_V$ be a topological order of this graph $P$. We remark that $\le_V$ indeed always exists for acyclic directed graphs.
Next, we will construct a single-version serial schedule $\schedule'$ of $\transset_P$ and show that it is view-equivalent to $\schedule$.
As can be expected, the choice of $\schord[\schedule']$ for transactions $\trans[v]$ (with $v\in V$) will coincide with $\le_V$. But we recall that the set $\transset_P$ also contains transactions $\trans[(u,v,w), 0]$ and $\trans[(u,v,w), \infty]$ for choices $(u,v,w)\in C$ whose place we still need to decide.
The transactions $\trans[(u,v,w), \infty]$ occur at the end of the schedule in some arbitrary order (for example the same order as in $\schedule$). For every $v\in V$ consider sets $\transset_v \mydef \{ \trans[(u,v,w), 0] \mid (u,v,w)\in C \}$. We position all transactions of such a set $\transset_v$ directly in front of transaction $\trans[v]$. The order of transactions in $\transset_v$ is again not important for the construction. 

It is clear that $\schedule'$ is a serial schedule, and we choose $\schvord[\schedule']$ and $\schvf[\schedule']$ such that $\schedule'$ is a single-version schedule.
It remains to show that $\schedule'$ is also view-equivalent to $\schedule$. More precisely, we still need to argue that (i) $\lastv[\schedule]{\x} = \lastv[\schedule']{\x}$ for every object $\x$ and that (ii) $\schvf[\schedule](o) = \schvf[\schedule'](o)$ for every read operation $o$.
For objects $\x_{(w, u)}$  condition (i) is immediate, as there is only one write operation on $\x_{(w,u)}$ (which occurs in transaction $\trans[u]$).
For objects $\x_{(u,v,w)}$, $\lastv[\schedule]{\x_{(u,v,w)}}$ and $\lastv[\schedule']{\x_{(u,v,w)}}$ clearly are the operation on $\x_{(u,v,w)}$ in transaction $\trans[(u,v,w), \infty]$.

For condition (ii), we remark that for objects $\x_{(w,u)}$, the only read operation $o$ on it is in transaction $\trans[w]$, and the only write operation $o'$ on it is in transaction $\trans[u]$.
Since 
    $\trans[w]\schord[\schedule'] \trans[u]$ it is immediate that $\schvf[\schedule](o) = \sstart = \schvf[\schedule'](o)$. For objects $\x_{(u,v,w)}$ the argument is slightly more complicated. In schedule $\schedule$ we have that both read operations 
$o$ and $o'$ on $\x_{(u,v,w)}$ in $\trans[u]$ and $\trans[w]$, respectively, read the version written by the write $\schvf[\schedule](o) = \schvf[\schedule](o') = o''$ on this object in transaction $\trans[(u,v,w), 0]$. Particularly notice that the other transactions writing to $\x_{(u,v,w)}$, being transaction $\trans[v]$ and $\trans[(u,v,w), \infty]$, have their write after the start of transactions $\trans[u]$ and $\trans[w]$.

For schedule $\schedule'$, we remark that $\trans[(u,v,w), 0]$ is placed before $\trans[w]$, thus with $\schvf[\schedule'](o') = o''$, and that $\trans[w] \schord[\schedule'] \trans[u]$ and either $\trans[u] \schord[\schedule'] \trans[v]$ or  $\trans[v] \schord[\schedule'] \trans[w]$ implying $\schvf[\schedule'](o) = o''$ as desired. 

We conclude that $\schedule'$ is view-equivalent to $\schedule$ and thus that $\schedule$ is indeed view-serializable.

\paragraph{If $\schedule$ is view-serializable then $P$ is acyclic.}

If $\schedule$ is view-serializable, then there is a schedule $\schedule'$ over $\transset_P$ that is view-equivalent with $\schedule$. 
Let $G=(V, B)$ be the graph obtained by adding for every pair of nodes $(w,u) \in V^2$ the arc $(w,u)$ to $B$ iff $\trans[w] <_{\schedule'} \trans[u]$.
We remark that $P$ defines a total order on $V$ and is thus clearly acyclic. It remains to show that the arcs $B$ are consistent with arcs $A$ and choices $C$ of polygraph $P$.

For this, we first consider an arbitrary arc $(w,u) \in A$ from $P$. The latter implies that we constructed $\trans[w]$ to have a read operation $o$ on object $\x_{(w,u)}$ with $\schvf[\schedule](o) = \sstart$. Due to view-equivalence between $\schedule$ and $\schedule'$, we also have $\schvf[\schedule'](o) = \sstart$.
Now since $\trans[u]$ has been constructed to have a write on $\x_{(w,u)}$ it follows from $\schedule$ being single-version that $\trans[w] \schord[\schedule']\trans[u]$ and thus $(w,u) \in B$. We remark that $(w,u) \in B$ also implies $(u,w) \not\in B$ since $\schedule$ is single-version.

Now consider an arbitrary choice $(u,v,w) \in C$ from $P$. The latter implies $(w,u) \in A$ and thus we already know from the previous argument that $\trans[w] \schord[\schedule']\trans[u]$. It follows from our construction that both $\trans[w]$ and $\trans[u]$ have a read operation $o$ and $o'$, respectively, on objects $\x_{(u,v,w)}$ with $\schvf[\schedule](o) =\schvf[\schedule](o') = o''$ with $o''$ the write operation on $\x_{(u,v,w)}$ in $\trans[(u,v,w), 0]$.
View-equivalence between $\schedule$ and $\schedule'$ implies $\schvf[\schedule'](o) =\schvf[\schedule'](o') = o''$ and thus $\trans[(u,v,w), 0] \schord[\schedule'] \trans[w] \schord[\schedule'] \trans[u]$ since $\schedule'$ is single-version. 

Now it follows from the fact that we constructed $\trans[v]$ to have a write operation on that same object $\x_{(u,v,w)}$ that either $\trans[v] \schord[\schedule'] \trans[(u,v,w), 0] \schord[\schedule'] \trans[w]$ or $\trans[u] \schord[\schedule'] \trans[v]$, hence either $(v,w) \in B$ or $(u,v) \in B$, as desired.

We conclude that $B$ is indeed compatible with $P$ and hence that $P$ is acyclic 
\end{proof}
\end{toappendix}

We stress that Example~\ref{ex:vs:not:cs} and Example~\ref{ex:schedule} already show that view- and conflict-serializability do not coincide for the class of isolation levels $\{\mvrc,\si,\ssi\}$.

% !TeX root = main.tex

\section{Related Work}
\label{sec:relatedwork}

\subsection{View-serializability}

For single-version schedules, Yannakakis~\cite{DBLP:journals/jacm/Yannakakis84} formally defines view-serializability. For this problem, NP-hardness follows from a trivial extension based on a result by Papadimitriou~\cite{DBLP:journals/jacm/Papadimitriou79b}, proving that deciding final-state serializability is NP-complete.
Yannakakis~\cite{DBLP:journals/jacm/Yannakakis84} furthermore proves that within this setting of single-version schedules, the class of conflict-serializable schedules is the largest monotonic class contained in the class of view-serializable schedules. That is, a single-version schedule $\schedule$ is conflict-serializable if and only if all its (single-version) subschedules (including $\schedule$ itself) are view-serializable. 

Bernstein and Goodman~\cite{DBLP:journals/tods/BernsteinG83} extend concurrency control towards multiversion databases. Their definition of view-equivalence only requires the two schedules to observe identical versions for each read operation, and does not require the two schedules to install the same last versions for each object (as we do in this paper). The rationale is that when each write operation introduces a new version, the final database state will contain all versions and subsequent transactions executed afterwards could access any version that is required. However, in the context of isolation levels, a context not considered in \cite{DBLP:journals/tods/BernsteinG83}, restrictions can be put on which versions can be read by subsequent transactions. For instance, practical systems like PostgreSQL always require that the last committed version should be read (which allows the DBMS to safely remove old versions when all active transactions can only observe newer versions of this object). In the setting of this paper, it therefore makes sense to additionally require that view-equivalent schedules install the same last versions for each object.

Our definition of view-serializability is furthermore different from multiversion serializability studied by Papadimitriou and Kanellakis~\cite{DBLP:journals/tods/PapadimitriouK84}. In particular, our definition assumes the version function $\schvf$ to be fixed while searching for a view-equivalent single-version serial schedule, whereas their setting only assumes the order of operations $\schord$ to be fixed. Such an order is then said to be multiversion serializable if a version function exists such that the resulting schedule is view-equivalent to a single-version serial schedule. 
For this notion of multiversion serializability, they additionally consider a setting where at any point in time only the $k$ most recent versions of each object are stored for some fixed positive integer $k$. It is interesting to note that the isolation levels considered in this paper do not imply a fixed upper bound on the number of versions for each object that must be stored at any point in time. Indeed, since transactions allowed under \si and \ssi essentially take a snapshot of the versions visible at the start of the transaction, concurrent transactions can require a potentially unbounded number of versions to be maintained.
Finally, in the context of isolation levels that effectively constrain version functions, as considered in this paper, it is sensible to consider the version function to be fixed when defining serializability.

\subsection{Robustness and allocation for transactions}
    Yannakakis~\cite{DBLP:journals/jacm/Yannakakis84} studies \emph{$S$-safety} of transaction sets with respect to different notions of serializability $S$. A set of transactions $\transset$ is said to be \emph{$S$-safe} if every (single-version) schedule $\schedule$ over a subset of $\transset$ is $S$-serializable. This notion of $S$-safety therefore corresponds to $S$-robustness against allocations allowing \emph{all} single-version schedules. A particularly interesting result by Yannakakis~\cite{DBLP:journals/jacm/Yannakakis84} is that view-safety and conflict-safety coincide. That is, for a given set of transactions $\transset$, one can construct a single-version schedule that is not view-serializable over a subset of $\transset$ iff a single-version schedule over a subset of $\transset$ exists that is not conflict-serializable.
    Our work can therefore be seen as a generalization of this result to specific classes of isolation levels. It is worth to point out that Proposition~\ref{prop:rob:basicproperty}(\ref{basic:two}) already shows that the property does not hold for all classes of isolation levels.

Fekete~\cite{DBLP:conf/pods/Fekete05} is the first work that provides a necessary and sufficient condition for deciding {exact conflict-robustness} against an isolation level (\si) for a workload of transactions. In particular, that work provides a characterization for optimal allocations when every transaction runs under either \snapshot
or strict two-phase locking (S2PL).
% The allocation then is acceptable when every possible execution respecting the allocated isolation levels is serializable.
{As a side result, a characterization for exact conflict-robustness against \snapshot is obtained.}
%, since robustness against \snapshot holds iff the allocation where each transaction is allocated to \snapshot is acceptable.
Ketsman et al.~\cite{DBLP:conf/pods/Ketsman0NV20} provide characterisations for exact conflict-robustness against \readcom and \readun under lock-based semantics. In addition, it is shown that the corresponding decision problems are complete for \coNP and \LOGSPACE, respectively, which should be contrasted with the polynomial time characterization obtained in \cite{vldbpaper} for exact conflict-robustness against {\it multiversion} read committed which is the variant that is considered in this paper.
{Vandevoort et al.~\cite{DBLP:conf/pods/VandevoortKN23} 
consider exact conflict-robustness against allocations over \isopostgres for which they provide a polynomial time decision procedure. They show in addition that there always is a unique optimal robust allocation over \isopostgres and provide a polynomial time algorithm to compute it.} {Other work studies exact conflict-robustness within a framework for uniformly specifying different isolation levels in a declarative way~\cite{DBLP:conf/concur/Cerone0G15,DBLP:conf/concur/0002G16,Cerone:2018:ASI:3184466.3152396,cerone_et_al:LIPIcs:2017:7794}. A key assumption here is \emph{atomic visibility} requiring that either all or none of the updates of each transaction are visible to other transactions. These approaches aim at higher isolation levels and cannot be used for \MVRC, as \MVRC does not admit \emph{atomic visibility}. None of these works consider view-robustness.}

Finally, we mention that the robustness problem is orthogonal to the problem of deciding whether a schedule is allowed under an isolation level. Biswas and Enea~\cite{DBLP:journals/pacmpl/BiswasE19} study this problem for a setting where schedules are represented by a partial order over the transactions (referred to as the session order) and a write-read relation, indicating for each read operation the write operation that wrote the observed version. A schedule is allowed under an isolation level if the partial order can be extended to a total order over the transactions consistent with the write-read relation as well as the criteria specific to the isolation level. For this setting, they show that verifying \textsc{read committed}, \textsc{read atomic} and \textsc{causal consistency} can be done in polynomial time, but deciding \textsc{prefix consistency} and \textsc{snapshot isolation} are NP-complete. Their setting should be contrasted with ours, where schedules are given as a total order over all operations and with a fixed version function and version order.

% !TeX root = main.tex

\section{Conclusions}
\label{sec:conclusions}

We showed that conflict- and view-robustness coincide for the class of isolation levels \isopostgres. The main implication is that for systems deploying these isolation levels, widening from conflict- to view-serializability does not allow for more sets of transactions to become robust. In addition, this paper can be seen as an extension of work by Yannakakis~\cite{DBLP:journals/jacm/Yannakakis84} who studied robustness (under the name of safety) for various serializability notions and obtained that view- and conflict-robustness coincide when \emph{all} schedules are allowed. 

An interesting direction for future work is studying robustness relative to other notions of serializability. A particularly interesting notion is the one of strict view-serializability. This notion extends view-serializability by requiring that the relative order of non-concurrent transactions should be preserved in the equivalent single-version serial schedule. Trivially, strict view-robustness implies view-robustness as every strict view-serializable schedule is view-serializable as well. However, since conflict-serializable schedules are not always strict view-serializable, conflict-robustness does not imply strict view-robustness in general. In fact, it can be shown that this result remains to hold even if we restrict our attention to $\{\rc\}$-allocations or $\{\si\}$-allocations,
% T1:              R(y) C
% T2: R(x)                W(y) C
% T3:      W(x) C
% => every schedule over T1, T2, T3 is conflict-serializable (insufficient conflicts to create a cycle), but this schedule is not strict-view serializable (T1 should precede T3 in equivalent serial schedule) and allowed under RC and SI.
and the complexity of strict view-robustness is therefore still open for $\isopostgres$-allocations. It would furthermore be interesting to identify relevant subsets where conflict-robustness coincides with strict view-robustness. 

%Albeit less relevance in practice, but from a theoretical perspective still interesting, the complexity of exact view-robustness, as defined in Section~\ref{sec:robustness}, remains open for \isopostgres-allocations.

%%
%% The acknowledgments section is defined using the "acks" environment
%% (and NOT an unnumbered section). This ensures the proper
%% identification of the section in the article metadata, and the
%% consistent spelling of the heading.
\begin{acks}
    This work is partly funded by FWO-grant G019921N.
\end{acks}

%%
%% The next two lines define the bibliography style to be used, and
%% the bibliography file.
\bibliographystyle{ACM-Reference-Format}
\bibliography{references}

% \received{December 2023}
% \received[revised]{February 2024}
% \received[accepted]{March 2024}

%%
%% If your work has an appendix, this is the place to put it.
% \appendix

\end{document}